\documentclass[10pt,twocolumn,letterpaper]{article}

\usepackage[preprint]{cvpr}      
\usepackage{amsmath}
\usepackage{amssymb}
\usepackage{amsthm}
\usepackage{booktabs}
\usepackage{multirow}
\usepackage{graphicx}

\newtheorem{theorem}{Theorem}[section]
\newtheorem{lemma}{Lemma}[section]
\newtheorem{corollary}{Corollary}[section]
\newtheorem{proposition}{Proposition}[section]
\newtheorem{definition}{Definition}[section]

\definecolor{cvprblue}{rgb}{0.21,0.49,0.74}
\usepackage[pagebackref,breaklinks,colorlinks,allcolors=cvprblue]{hyperref}

\def\paperID{*****} 
\def\confName{CVPR}
\def\confYear{2026}

\title{Revisiting Certified Defense with Differential Privacy on Vision Transformers}

\author{Jun Yan\\
SHOU\\
{\tt\small yanjun@ieee.org}
\and
Weiquan Huang\\
Tongji University\\
{\tt\small weiquanh@tongji.edu.cn}
\and
Qixian Zhang\\
Tongji University\\
{\tt\small zhangqx@tongji.edu.cn}
\and
Yan Bai\\
Independent Researcher\\
{\tt\small baiyan1996@icloud.com}
\and
Shutai Zhanng\\
SHOU\\
{\tt\small stzhang@shou.edu.cn}
}

\begin{document}
\maketitle
\begin{abstract}
Certified defenses that incorporate differential privacy have proven effective on Convolutional Neural Networks (CNNs), furnishing rigorous robustness guarantees against norm-bounded adversaries. However, the certified robustness behavior of Pixel Differential Privacy (PixelDP) remains largely unexplored with the self-attention architecture now dominating the deep-learning landscape. Given that the Transformer has a profound impact on our daily applications from the digital world to the physical world, it is crucial to study certified robustness through differential-privacy-style stability. To fill this research gap, we revisit this construction in Vision Transformers and identify a failure mode that is largely hidden in the convolutional setting. When noise is injected after the patch embedding, the Laplace mechanism with the inherited grouped $\ell_1$ sensitivity bound collapses to chance-level accuracy across noise scales, whereas the Gaussian mechanism remains trainable. This contrast isolates the source of failure: not the injected noise itself, but the geometry of the sensitivity constraint. We show that the attenuation induced by the inherited $\Delta_{1,1}$ projection increases with layer width and kernel size according to a random-matrix scale $C/(\sqrt{M}+\sqrt{N})$. Replacing the $\ell_1$-type constraint with a spectral-norm constraint eliminates the collapse across datasets and architectures, but creates a fundamental obstacle: the repaired models no longer satisfy the sensitivity condition required by the standard Laplace certificate. We resolve this mismatch by deriving a dimension-free $(\varepsilon,\ \delta)$-privacy guarantee for the Laplace mechanism under $\ell_2$ sensitivity through concentration of the privacy loss. The resulting certificate applies directly to the repaired models without retraining and avoids the $\sqrt{d}$ privacy degradation incurred by the naive conversion from $\ell_2$ to $\ell_1$ sensitivity. Experiments on CIFAR-10, CIFAR-100, and SVHN reproduce the failure-and-repair pattern, while ImageNet confirms that the inherited grouped-bound failure persists at scale. Finally, matched-budget experiments show that adversarial training improves certified accuracy at the evaluated positive radii.

\end{abstract}
    
\section{Introduction}
\label{sec:intro}

While deep neural networks (DNNs) empower social life, they also raise various security and privacy concerns~\cite{bengio2024managing}. During inference, a network may be threatened by adversarial examples. An attacker can cause the model to make incorrect predictions by adding imperceptible perturbations~\cite{goodfellow2015explaining,madry2018towards}. It poses serious risks to safety-critical applications such as autonomous driving~\cite{eykholt2018robust} and robotics~\cite{wang2025exploring}.
\par In addition to empirical defenses~\cite{madry2018towards, zhang2019theoretically,yan2024enhance}, the certified defense approaches~\cite{cohen2019certified, zhang2020towards,li2023sok} offer provable robustness guarantees under explicit assumptions. PixelDP~\cite{lecuyer2019certified} is a general framework that leverages differential-privacy-style stability~\cite{dwork2006differential,abadi2016deep} to certify robustness against norm-bounded input perturbations. It constrains the sensitivity of a pre-noise mapping and injects calibrated noise
to stabilize the randomized prediction.
\par However, previous research on the certified robustness of differential privacy mainly focuses on  Convolutional Neural Networks (CNNs)~\cite{szegedy2016rethinking, he2016deep}. Transformers use self-attention to model global interactions between tokens~\cite{vaswani2017attention,dosovitskiy2021image}. The recent scaling law~\cite{kaplan2020scaling} suggests that increasing model and data scale can improve predictive performance~\cite{kaplan2020scaling}. Transformers can be trained on huge amounts of data and are hardware friendly for large-scale parallel processing, making them widely used as the backbone for building general artificial intelligence base models~\cite{bubeck2023sparks}. The large language model serving the public in the cloud~\cite{guo2025deepseek} and the general-purpose robot operating at the edge~\cite{kim2025openvla} are built on this transformer architecture that can predict the next token. Nevertheless, it is still insufficient to explore the robustness of defense with differential privacy in this new architecture.
\par To bridge this gap, we rethink the certified defense with differential privacy on the vision transformer architectures. Our study aims to resolve such a research problem:\\
\indent \textit{How can PixelDP be ported to self-attention architectures while preserving a valid and useful robustness certificate?}
\par This paper revisits the pipeline in the new setting, as illustrated in Fig.~\ref{fig:PixelDPVIT_Framework}. Addressing this problem requires resolving two technical challenges. First, standard dot-product self-attention is not globally Lipschitz on an unbounded domain. We therefore inject noise immediately after the patch embedding, a linear layer whose sensitivity can be explicitly controlled, exactly as PixelDP treats a first convolution. Our experiments confirm the distinction sharply. Noise placed after an attention block is absorbed by training and leaves accuracy at the baseline for every noise level, which certifies nothing, whereas noise after the patch embedding behaves as the theory prescribes. Second, PixelDP's Gaussian arm degrades gracefully under the faithful port, while the Laplace arm may collapse to chance accuracy even at the smallest noise level. The failure is not a matter of noise magnitude, since the Gaussian mechanism injects a larger standard deviation at the same attack bound and trains without difficulty. Nor is it intrinsic to the Laplace distribution; the original PixelDP pipeline trains Laplace models on the same grid without incident, on convolutions. The culprit is the grouped convolution bound inherited by the Laplace route. For a non-overlapping patch embedding, this bound aggregates all spatial kernel locations of an input channel and is substantially more conservative than the exact stride-aware $\ell_1\!\to\!\ell_1$ operator norm. We prove a random-matrix law showing that, relative to the spectral projection used by the Gaussian route, the resulting attenuation scales as $C/(\sqrt{M}+\sqrt{N})$ under the Gaussian initialization model. At the measured ViT patch embedding, this large norm ratio predicts a severe drop in the local signal-to-noise ratio, which makes optimization substantially harder and accounts for the failure we observe across all training recipes.
\begin{figure*}[!t]
\centering
\includegraphics[width=0.9\textwidth]{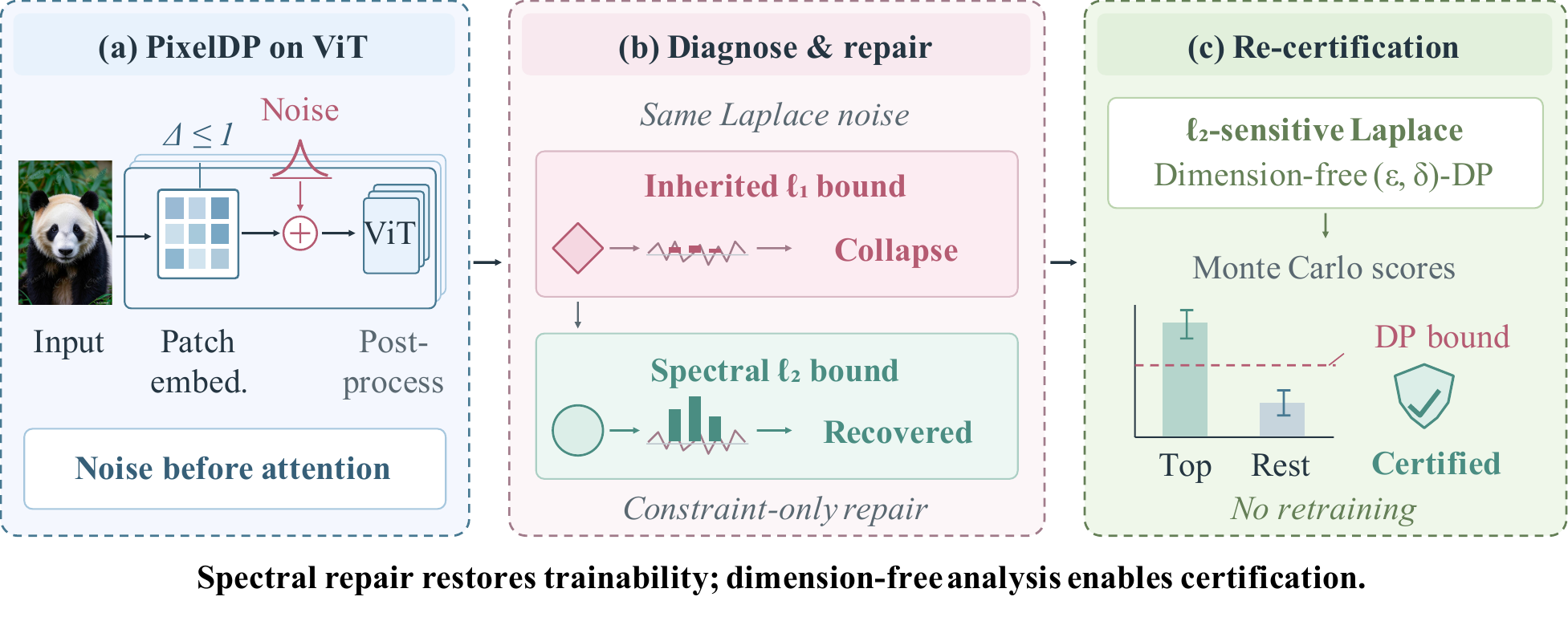}
\caption{The framework of scaling PixelDP on ViT.}
\label{fig:PixelDPVIT_Framework}
\end{figure*}
This diagnosis suggests its own experiment. We decouple the constraint from the mechanism and train Laplace models under the spectral constraint of the Gaussian route, changing nothing else. The spectral repair also restores trainability across CIFAR-10, CIFAR-100, and SVHN. The grouped-bound failure also persists at ImageNet scale on the autoencoder configuration of the original pipeline. The causal chain is complete, but the repair is not yet a defense. The Laplace mechanism's $(\varepsilon,0)$ guarantee is proved from an $\ell_1$ sensitivity bound, the spectral constraint provides an $\ell_2$ bound, and the naive conversion between the two costs a factor of $\sqrt{d}$ in the privacy budget, about $111$ at our noise layer, which would demand a hundredfold larger noise.

To close this certification gap in PixelDP, we specialize the existing $\ell_2$-sensitivity Laplace guarantee to our setting and give a direct privacy-loss derivation for completeness. The analysis framework needs only a closed form for the shifted first absolute moment of a Laplace variable and Hoeffding's inequality applied to the privacy loss, whose coordinates are independent and bounded. The guarantee plugs directly into the PixelDP certification procedure, holds at every attack radius without group privacy, and applies to our trained checkpoints as a re-certification, with no retraining. It also quantifies an honest cost. At matched $(\varepsilon,\delta)$ the calibrated Laplace noise carries about $1.47\times$ the standard deviation of the Gaussian mechanism, so Gaussian remains the efficient choice for pure $\ell_2$ defense, and the value of the theorem is that it closes a gap the original framework left open and turns a diagnostic into a certified configuration.

Finally we ask whether the certified accuracy of the faithful port can be improved rather than merely explained. Following the margin perspective of adversarially trained smoothing~\cite{salman2019provably}, we combine the noise layer with $\ell_2$ adversarial training under a matched-budget protocol that isolates the effect of the attack from the training recipe. On all three datasets, adversarial training improves certified accuracy over the nonzero operating region of the matched controls. On SVHN the joint training collapses to the class prior and a two-phase schedule, noisy pretraining followed by adversarial fine-tuning, restores it, which we report as a practical recipe.

Our contributions are the following.
\begin{itemize}
\item A faithful port of PixelDP to Vision Transformers with a placement analysis showing that the patch embedding provides a simple and directly enforceable pixel-space sensitivity bound.
\item We identify severe over-attenuation under the inherited grouped $\ell_1$ bound. A random-matrix law predicts its scale, while exact stride-aware and spectral controls isolate the effect of the sensitivity constraint.
\item We connect the repaired spectral configuration to an existing approximate-DP guarantee for coordinatewise Laplace noise under $\ell_2$ sensitivity~\cite{andersson2024count}, derive the corresponding closed-form PixelDP calibration, and show that existing repaired checkpoints can be certified without retraining.
\item An ImageNet study across Inception and ViT-Base backbones confirming the post-processing property empirically, and a matched-budget adversarial training ablation showing certified gains at every radius on CIFAR-10, CIFAR-100, and SVHN.
\end{itemize}

\section{Related Work}
\label{sec:related_works}
\paragraph{Certified defenses from noise.}
PixelDP~\cite{lecuyer2019certified} connects differential privacy~\cite{dwork2006differential,dwork2014algorithmic} to adversarial robustness. It injects calibrated noise after a sensitivity-constrained layer and certifies predictions through the stability of expected scores. Randomized smoothing~\cite{cohen2019certified} certifies randomized classifiers, with the standard Gaussian construction injecting noise directly at the input and admitting a tighter $\ell_2$ certificate. Our goal is therefore diagnostic rather than to outperform specialized
input-space smoothing on pure $\ell_2$ certification. SmoothAdv~\cite{salman2019provably} shows that adversarially training the smoothed classifier enlarges its margins and improves certified accuracy, and later work extends smoothing to further settings~\cite{alfarra2022data,rekavandi2024certified,ugare2024incremental}. Gaussian differential privacy~\cite{dong2022gaussian} offers an alternative accounting language that we use for a secondary calibration. Our study differs from this line in its object. We do not propose a new smoothing distribution. Instead, we study how PixelDP's inherited mechanism--constraint pairings interact with ViT patch embeddings and how the resulting spectral repair can be certified under $\ell_2$ sensitivity.

\paragraph{Robustness of Vision Transformers.}
The Lipschitz behavior of attention has been studied directly. Self-attention is not Lipschitz on unbounded domains~\cite{kim2021lipschitz}, variants restore control through architectural change~\cite{qi2023lipsformer}, and sharp bounded-domain estimates exist for normalized token sets~\cite{castin2024smooth,khromov2024some}. These results concern token-space smoothness of the attention map.  They do not directly yield the simple, enforceable pixel-space sensitivity bound required by PixelDP: an $\ell_p$ pixel perturbation generally affects multiple tokens, while existing attention-side bounds depend on domain or architectural assumptions and do not directly control the full normalization--attention prefix from pixels to tokens. We therefore place the noise before the first attention block, immediately after the patch embedding, whose sensitivity can be explicitly constrained. On the empirical side, adversarial training recipes for ViTs have matured~\cite{mo2022adversarial,debenedetti2023light}, and we adopt the warmup and weak augmentation guidance of the light recipe~\cite{debenedetti2023light} for our ablation, with robust overfitting handled by early stopping~\cite{rice2020overfitting} and adaptive evaluation following standard guidance~\cite{athalye2018obfuscated,croce2020reliable,carlini2017towards}.

\paragraph{Privacy and robustness together.}
Differentially private training protects the training set~\cite{abadi2016deep} and can be combined with smoothing for joint guarantees~\cite{wu2024augment,lyu2024adaptive}. PixelDP and our work use privacy as an analysis tool for inference-time robustness rather than for training data. The two goals are orthogonal and compatible.

\section{Porting PixelDP to Vision Transformers}
\label{sec:method}

\subsection{Where the Noise Can Go}
\label{sec:placement}
The construction of Sec.~\ref{sec:prelim} needs a pre-noise map $g$ whose sensitivity can be bounded and enforced during training. Standard dot-product self-attention is not globally Lipschitz on an unbounded token domain~\cite{kim2021lipschitz}. Although finite bounds can in principle be derived on bounded domains, available attention-side bounds are architecture- and domain-dependent and do not directly provide the simple, enforceable pixel-space sensitivity bound required by our PixelDP protocol. We therefore place the noise immediately after the patch embedding, whose sensitivity can be controlled explicitly.

What remains is the patch embedding, a single strided convolution that is linear in the input. We therefore inject the noise immediately after the patch embedding and enforce its sensitivity to one after every optimizer step, exactly as PixelDP treats a first convolution. This choice has a convenient consequence. Because the noise sits directly after the linear embedding whose $\Delta_{2,2}$ equals one, token-space guarantees transfer to pixel space with constant exactly one, and no pullback factor appears. The patch embedding is therefore a simple and directly auditable placement for our PixelDP port. When we instead add noise after the first attention block, training absorbs the perturbation completely, accuracy stays at the noise-free baseline for every noise level on SVHN, and nothing is certified because no sensitivity bound exists. We report this absorption experiment in the supplement as evidence that the placement analysis is not a formality.

\subsection{Sensitivity-Induced Signal Collapse}
\label{sec:collapse}
Under the faithful port a sharp asymmetry appears. The Gaussian arm degrades gracefully along the standard grid $L\in\{0.03,0.1,0.3,1\}$, while the Laplace arm collapses to chance at $L=0.03$ on every dataset (Fig.~\ref{fig:repair}). Noise magnitude does not explain this. At the same numerical radius $L$, the Gaussian $\ell_2$ route injects standard deviation $2.54L$, while the Laplace $\ell_1$ route injects $1.41L$. This comparison is a noise-amplitude diagnostic, not a like-for-like comparison of the two threat sets. The mechanism itself does not explain it either, since the original paper trains Laplace models on the same grid on convolutions. The difference is the constraint that each mechanism inherits: a conservative grouped $\ell_1$ bound for Laplace and spectral $\ell_2$ control for Gaussian.

Let $s_{\rm grp}(W)=\max_j\sum_{m,a,b}|W_{m,j,a,b}|$ denote the inherited grouped bound of Eq.~(\ref{eq:s1}). Renormalizing $s_{\rm grp}(W)$ to one aggregates $C=Mk^2$ coefficients per input-channel group. Renormalizing the spectral norm divides it by $\sigma_{\max}$, which grows only like $\sqrt{M}+\sqrt{N}$. This gap predicts severe attenuation under the inherited grouped bound.

\begin{proposition}[Norm-ratio law]
\label{prop:collapse}
\begin{equation}
\label{eq:rho_divice}
\rho
\ge
\frac{
C\sqrt{2/\pi}-\sqrt{2C\ln(1/\delta_1)}
}{
\sqrt{M}+\sqrt{N}+\sqrt{2\ln(1/\delta_2)}
}.
\end{equation}
Thus the lower bound has leading-order scale
\begin{equation}
\rho
=
\Omega\!\left(
\frac{C}{\sqrt{M}+\sqrt{N}}
\right).
\end{equation}
Motivated by the leading Gaussian scales of the numerator and
denominator, we use
\begin{equation}
\label{eq:ratio}
\rho_{\mathrm{pred}}
=
\frac{C\sqrt{2/\pi}}{\sqrt{M}+\sqrt{N}}
\end{equation}
as an initialization-scale diagnostic.
\end{proposition}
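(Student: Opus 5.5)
We read $\rho$ as the ratio $s_{\rm grp}(W)/\sigma_{\max}(W)$ of the two normalizers at Gaussian initialization. Here $W$ has i.i.d.\ $\mathcal N(0,\sigma^2)$ entries. Each input-channel group contains $C=Mk^2$ coefficients, and the spectral norm is that of the $M\times N$ matricization of the patch embedding. Since $\rho$ is scale invariant, we take $\sigma=1$. The plan is to lower bound the numerator with probability at least $1-\delta_1$, upper bound the denominator with probability at least $1-\delta_2$, and conclude by a union bound. The bound (\ref{eq:rho_divice}) then holds with probability at least $1-\delta_1-\delta_2$.

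\emph{Numerator.} We have $s_{\rm grp}(W)=\max_j S_j\ge S_{j_0}$ for any fixed group $j_0$, where $S_{j_0}=\sum_{i=1}^{C}|g_i|$ with $g_i$ i.i.d.\ standard Gaussian. Two facts are needed. First, $\mathbb E|g_i|=\sqrt{2/\pi}$, so $\mathbb E S_{j_0}=C\sqrt{2/\pi}$. Second, the map $g\mapsto\sum_i|g_i|$ is $\sqrt C$-Lipschitz in $\ell_2$, since $\big|\|g\|_1-\|h\|_1\big|\le\|g-h\|_1\le\sqrt C\|g-h\|_2$. Gaussian concentration for Lipschitz functions (Tsirelson--Ibragimov--Sudakov) therefore gives
\begin{equation*}
\Pr\big[S_{j_0}\le C\sqrt{2/\pi}-t\big]\le e^{-t^2/(2C)}.
\end{equation*}
Setting $t=\sqrt{2C\ln(1/\delta_1)}$ yields the numerator in (\ref{eq:rho_divice}). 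Taking the maximum over $j$ could only help, so using a single group suffices.

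\emph{Denominator.} For an $M\times N$ standard Gaussian matrix, Gordon's inequality gives $\mathbb E\sigma_{\max}\le\sqrt M+\sqrt N$. The map $W\mapsto\sigma_{\max}(W)$ is $1$-Lipschitz in Frobenius norm. Gaussian concentration then gives
\begin{equation*}
\Pr\big[\sigma_{\max}>\sqrt M+\sqrt N+t\big]\le e^{-t^2/2}.
\end{equation*}
Setting $t=\sqrt{2\ln(1/\delta_2)}$ yields the denominator. This is the standard Davidson--Szarek bound, and we would cite it rather than reprove it.

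\emph{Conclusion and asymptotics.} The union bound combines the two events. When the numerator is positive, dividing gives (\ref{eq:rho_divice}). For fixed $\delta_1,\delta_2$, the term $\sqrt{2C\ln(1/\delta_1)}$ is $o(C)$ and the term $\sqrt{2\ln(1/\delta_2)}$ is $O(1)$. Hence the ratio is $\Omega(C/(\sqrt M+\sqrt N))$, and keeping only the leading terms gives $\rho_{\rm pred}$.

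The main obstacle is not any single inequality but fixing the correct matrix shape. The grouped sum runs over $M k^2$ entries, yet the spectral norm must refer to the matricization the paper actually normalizes, and the $M,N$ in the denominator must match that operator. For a non-overlapping stride, the convolution operator's spectral norm equals that of the reshaped kernel matrix, because the operator is block diagonal over patches. Establishing this reduction is what makes the Gordon bound apply with the stated dimensions. If instead the reshaped matrix is $M\times(Nk^2)$, the denominator changes, and the argument must be adjusted.
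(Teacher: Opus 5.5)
Your proposal is correct and follows essentially the same route as the paper. The paper also uses Gaussian concentration of the $\sqrt{C}$-Lipschitz grouped $\ell_1$ mass of one fixed input-channel group, the Davidson--Szarek bound on $\sigma_{\max}$, and a union bound, with the weight scale cancelling in the ratio. Your explicit failure probability $\delta_1+\delta_2$, the treatment of a nonpositive numerator, and the block-diagonal justification of the matricization are details the paper leaves implicit; it takes the matrix to be $M\times N$ with $N=C_{\rm in}k^2$, which settles your hedge about the shape.
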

The proof, in the supplement, uses Gaussian concentration of the column mass and the Davidson--Szarek bound on the spectral norm~\cite{vershynin2018high}. The leading constant is accurate at initialization and provides a useful architecture-level scale. Evaluating $C\sqrt{2/\pi}/(\sqrt{M}+\sqrt{N})$ at the three first layers in our study predicts $12.5$, $34.8$, and $118$, and direct measurement on the same weights gives $13$, $38$, and $123$ (Fig.~\ref{fig:law}).

\begin{corollary}[Linear-component attenuation]
\label{cor:snr}
Consider the same weight matrix $W$ under the two scalar projections
\[
W_{\rm grp}
=
W/s_{\rm grp}(W),
\qquad
W_{\rm spec}
=
W/\sigma_{\max}(W).
\]
For every input $x$,
\[
W_{\rm grp}x
=
\rho_{\rm grp}^{-1}W_{\rm spec}x,
\qquad
\rho_{\rm grp}
=
\frac{s_{\rm grp}(W)}{\sigma_{\max}(W)}.
\]
Hence, against the same additive noise scale, any homogeneous signal-to-noise measure applied to the weight-dependent linear component is reduced by a factor $\rho_{\rm grp}$ under the first projection. This statement does not include additive biases or subsequent positional embeddings.
\end{corollary}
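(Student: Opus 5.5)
The corollary is essentially algebraic, so the plan is to verify the identity directly and then read off the signal-to-noise consequence from homogeneity. Both $s_{\rm grp}$ and $\sigma_{\max}$ are positive scalars whenever $W\neq 0$, which we assume; the Gaussian initialization model of Proposition~\ref{prop:collapse} gives this almost surely. First I would write $W_{\rm grp}x = W x / s_{\rm grp}(W)$ and $W_{\rm spec}x = Wx/\sigma_{\max}(W)$. Eliminating $Wx$ gives $W_{\rm grp}x = (\sigma_{\max}(W)/s_{\rm grp}(W))\, W_{\rm spec}x = \rho_{\rm grp}^{-1} W_{\rm spec}x$. This holds for every $x$, because both maps are the same linear operator $W$ rescaled by constants.

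Second, I would make precise what a ``homogeneous signal-to-noise measure'' means. The natural definition is a functional $\mathrm{SNR}(u,\sigma)$ of a signal vector $u$ and a noise scale $\sigma$ that depends only on the ratio, so that $\mathrm{SNR}(\lambda u,\sigma)=\mathrm{SNR}(u,\sigma/\lambda)$. The simplest instance is a degree-one homogeneous quantity such as $\|u\|/\sigma$, for any norm. Under this convention, take the same additive noise $Z$ (Laplace or Gaussian) with fixed scale in both configurations. Then the signal component $W_{\rm grp}x$ is exactly $\rho_{\rm grp}^{-1}$ times $W_{\rm spec}x$, and the measure drops by the factor $\rho_{\rm grp}$. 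If the measure is instead taken to be homogeneous of degree $k$, for example a power ratio with $k=2$, the reduction is $\rho_{\rm grp}^{k}$. I would state the degree-one case as the intended reading and remark on the general case.

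Third, I would justify the exclusions. The identity concerns the weight-dependent linear part only. An affine patch embedding $Wx+b$ is not rescaled uniformly, since the bias is either left untouched or projected differently. The same holds for positional embeddings added after the noise. So the scaling statement holds for the $Wx$ term alone, which is exactly the qualification in the statement. I would also connect this to Proposition~\ref{prop:collapse}: the proposition bounds $\rho_{\rm grp}$ from below at initialization, and combined with the corollary this yields the predicted attenuation.

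The only real obstacle is definitional rather than technical: pinning down ``homogeneous SNR measure'' so that the factor is exactly $\rho_{\rm grp}$ and not a power of it. I would fix degree-one homogeneity explicitly in the proof.
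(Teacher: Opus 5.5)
Your proposal is correct and takes essentially the same route as the paper, which gives no separate proof and treats $W_{\mathrm{grp}}x=\rho_{\mathrm{grp}}^{-1}W_{\mathrm{spec}}x$ as immediate from the two scalar normalizations of the same $W$. Fixing degree-one homogeneity explicitly is the reading the paper implicitly uses when it quotes the attenuation as a factor of exactly $\rho_{\mathrm{grp}}$.
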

At the ViT patch embedding, $\rho\approx 123$ pushes the projected signal to about $4\%$ of the injected noise at $L=0.03$, while the spectral projection leaves it at $2.4$ times the noise. This severe attenuation makes optimization substantially harder and is consistent with the failure observed across the training recipes we tested.

The diagnosis implies a causal test. Decouple the constraint from the mechanism and train the Laplace mechanism under the spectral constraint, changing nothing else. This diagnosis predicts that replacing the inherited $\Delta_{1,1}$ constraint with the spectral constraint should substantially improve trainability. The observed recovery across
datasets supports this constraint-based explanation (Sec.~\ref{sec:exp_repair}). The repaired configuration, however, is not yet certified, because the Laplace guarantee of Sec.~\ref{sec:prelim} is proved from an $\ell_1$ bound that the spectral projection does not provide. The naive conversion $\|v\|_1\le\sqrt{d}\,\|v\|_2$ costs a factor $\sqrt{d}$ in the privacy budget, and at our noise layer $d=192\times 8\times 8$ gives $\sqrt{d}\approx 110.9$, which would demand a hundredfold larger noise. This is also, we believe, the quantitative reason the original framework never offered a 2-norm Laplace configuration.

\subsection{A Dimension-Free Certificate for Laplace under $\ell_2$ Sensitivity}
\label{sec:repair}
Our main theoretical result removes the $\sqrt{d}$ obstacle. Throughout, $M(x)=g(x)+Z$ with $Z_i\sim\operatorname{Lap}(b)$
i.i.d., and
\[
v=g(x')-g(x)
\]
denotes the output shift.

\begin{lemma}[Shifted absolute moment]
\label{lem:shift}
If $z\sim\mathrm{Lap}(0,b)$ then $\mathbb{E}\,|z-v|=|v|+b\,e^{-|v|/b}$ for every $v\in\mathbb{R}$, and consequently
\begin{equation}
\label{eq:perterm}
\mathbb{E}\bigl[\,|z-v|-|z|\,\bigr]\;=\;b\,\varphi\!\Bigl(\frac{|v|}{b}\Bigr)\;\le\;\frac{v^2}{2b},
\qquad \varphi(t)=t+e^{-t}-1 .
\end{equation}
\end{lemma}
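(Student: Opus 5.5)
The plan is to compute $\mathbb{E}|z-v|$ directly from the Laplace density $p(z)=\frac{1}{2b}e^{-|z|/b}$, then derive the second claim by subtraction and an elementary inequality.

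\textbf{Reduction to $v\ge 0$.} Since $z$ and $-z$ have the same law, $\mathbb{E}|z-v|=\mathbb{E}|z+v|$. Hence it suffices to treat $v\ge 0$ and replace $v$ by $|v|$ at the end.

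\textbf{Computing the shifted moment.} For $v\ge 0$ I use the identity
\[
|z-v| = (z-v) + 2(v-z)_+ .
\]
Because $\mathbb{E}z=0$, this gives $\mathbb{E}|z-v| = -v + 2\,\mathbb{E}(v-z)_+$. I then split the remaining integral at $0$:
\[
\mathbb{E}(v-z)_+ = \int_{-\infty}^{0}(v-z)\tfrac{1}{2b}e^{z/b}\,dz + \int_0^{v}(v-z)\tfrac{1}{2b}e^{-z/b}\,dz .
\]
The first piece equals $\tfrac{v}{2}+\tfrac{b}{2}$. For the second, substitute $u=z/b$ and integrate by parts, which gives $\tfrac12\bigl(v - b + b e^{-v/b}\bigr)$. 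Summing, $\mathbb{E}(v-z)_+ = v + \tfrac{b}{2}e^{-v/b}$, and therefore
\[
\mathbb{E}|z-v| = v + b e^{-v/b}.
\]
As a check, $v=0$ gives $\mathbb{E}|z|=b$, which is the correct first absolute moment of $\mathrm{Lap}(0,b)$.

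\textbf{The difference identity.} Subtracting $\mathbb{E}|z|=b$ yields
\[
\mathbb{E}\bigl[|z-v|-|z|\bigr] = |v| + b e^{-|v|/b} - b = b\,\varphi(|v|/b).
\]
The expectation of the difference is well defined because $\bigl||z-v|-|z|\bigr|\le |v|$.

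\textbf{The quadratic bound.} It remains to show $\varphi(t)\le t^2/2$ for $t\ge 0$, i.e.\ $e^{-t}\le 1-t+t^2/2$. This is the Taylor remainder for $e^{-t}$: with $h(t)=1-t+t^2/2-e^{-t}$ we have $h(0)=h'(0)=0$ and $h''(t)=1-e^{-t}\ge 0$, so $h\ge 0$ on $[0,\infty)$. Applying this with $t=|v|/b$ gives $b\,\varphi(|v|/b)\le b\cdot\frac{v^2}{2b^2}=\frac{v^2}{2b}$.

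\textbf{Main obstacle.} The only place care is needed is the bookkeeping of the piecewise integral. The decomposition through $(v-z)_+$ keeps this to two elementary integrals, so I expect no genuine difficulty.
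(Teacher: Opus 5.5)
Your proof is correct and follows the paper's route. You integrate directly against the Laplace density for $v\ge 0$ and use symmetry for $v<0$; your identity $|z-v|=(z-v)+2(v-z)_+$ only trims the paper's three-piece split at $0$ and $v$ down to two integrals. You then subtract $\mathbb{E}|z|=b$ and check $\varphi(t)\le t^2/2$ by elementary calculus. Your function $h$ is exactly the paper's $\psi$, and your condition $h''=1-e^{-t}\ge 0$ makes explicit the nonnegativity of $\varphi=\psi'$, which the paper uses without comment.
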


\begin{lemma}[Tail bound implies DP]
\label{lem:tail}
Fix $x,x'$ and let $\mathrm{PL}(y)=\ln\bigl(p_{M(x)}(y)/p_{M(x')}(y)\bigr)$ be the privacy loss. If $\Pr_{y\sim M(x)}[\mathrm{PL}(y)>\varepsilon]\le\delta$, then Eq.~(\ref{eq:dp}) holds for this pair.
\end{lemma}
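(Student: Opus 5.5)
The plan is the standard "bad set" decomposition. Eq.~(\ref{eq:dp}) is the usual $(\varepsilon,\delta)$ condition: for every measurable $S$, $\Pr[M(x)\in S]\le e^{\varepsilon}\Pr[M(x')\in S]+\delta$. Write $p=p_{M(x)}$ and $q=p_{M(x')}$. Both are everywhere positive densities on $\mathbb{R}^d$, since they are products of Laplace densities. Hence $\mathrm{PL}(y)=\ln(p(y)/q(y))$ is finite and measurable.

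Define the bad set $B=\{y:\mathrm{PL}(y)>\varepsilon\}$. By hypothesis $\Pr_{y\sim M(x)}[y\in B]=\int_B p\le\delta$.

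For an arbitrary measurable $S$, split it as $S=(S\cap B^c)\cup(S\cap B)$. On $S\cap B^c$ we have $\mathrm{PL}(y)\le\varepsilon$, that is, $p(y)\le e^{\varepsilon}q(y)$ pointwise. Integrating gives
\begin{equation*}
\int_{S\cap B^c}p\le e^{\varepsilon}\int_{S\cap B^c}q\le e^{\varepsilon}\Pr[M(x')\in S].
\end{equation*}
On the other piece, $\int_{S\cap B}p\le\int_B p\le\delta$. Adding the two bounds yields $\Pr[M(x)\in S]\le e^{\varepsilon}\Pr[M(x')\in S]+\delta$, which is Eq.~(\ref{eq:dp}) for the ordered pair $(x,x')$.

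If Eq.~(\ref{eq:dp}) is meant symmetrically, apply the same argument with the roles of $x$ and $x'$ exchanged. This uses the tail hypothesis for the reversed pair, which the later theorem will supply by symmetry, since only $|v_i|$ enters.

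There is essentially no obstacle. The only points needing care are that the densities are strictly positive, so the privacy loss is well defined everywhere, and that Eq.~(\ref{eq:dp}) is stated for events; it is not a pointwise density condition. If the paper's Eq.~(\ref{eq:dp}) is phrased through expected scores rather than events, I would instead note that the bounded-output (expected score) version follows from the event version by the layer-cake representation $\mathbb{E}[f]=\int_0^1\Pr[f>t]\,dt$ for $f\in[0,1]$, applied term by term.
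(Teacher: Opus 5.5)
Your proof is correct and is the same argument the paper gives: split an arbitrary event $T$ along $\{\mathrm{PL}\le\varepsilon\}$, bound the good part by $e^{\varepsilon}\Pr_{M(x')}[T]$ via the pointwise density ratio, and bound the bad part by $\delta$ via the tail hypothesis. Your remarks on the strict positivity of the Laplace densities and on the reversed ordered pair are sound additions; the paper handles the reversed pair at the end of the proof of Theorem~\ref{thm:lapl2}.
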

\par Approximate-DP guarantees for coordinatewise Laplace noise under an $\ell_2$ sensitivity bound have been established previously~\cite{andersson2024count}. We specialize this result to the PixelDP setting and give a direct privacy-loss derivation for completeness. The role of the result here is not a new Laplace privacy theorem, but to close the certification gap created by replacing the original $\Delta_{1,1}$ constraint with the spectral $\Delta_{2,2}$ constraint.
\begin{theorem}[Laplace mechanism under $\ell_2$ sensitivity]
\label{thm:lapl2}
Let $M(x)=g(x)+Z$ with $Z\in\mathbb{R}^d$, $Z_i\sim\mathrm{Lap}(b)$ i.i.d. For any pair $x,x'$ with $\|g(x)-g(x')\|_2\le S$ and any $\delta\in(0,1)$, the pair satisfies Eq.~(\ref{eq:dp}) with
\begin{equation}
\label{eq:lapl2}
\varepsilon \;=\; \frac{S^2}{2b^2}\;+\;\frac{S}{b}\sqrt{2\ln(1/\delta)} ,
\end{equation}
independent of the dimension $d$.
\end{theorem}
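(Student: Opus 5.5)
The plan is to bound the privacy loss directly and then invoke Lemma~\ref{lem:tail}.

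\textbf{Step 1: write the privacy loss as a sum.} Write $y=g(x)+z$ with $z\sim M(x)-g(x)$, i.e.\ $z_i\sim\mathrm{Lap}(b)$ i.i.d. The Laplace density factorizes, so
\[
\mathrm{PL}(y)=\frac{1}{b}\sum_{i=1}^d\bigl(|z_i-v_i|-|z_i|\bigr),
\qquad v=g(x')-g(x).
\]
This is a sum of independent terms $X_i=\tfrac1b(|z_i-v_i|-|z_i|)$.

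\textbf{Step 2: bound the mean.} By Lemma~\ref{lem:shift},
\[
\mathbb{E}X_i=\varphi(|v_i|/b)\le v_i^2/(2b^2).
\]
Summing over coordinates gives
\[
\mathbb{E}\,\mathrm{PL}\le \|v\|_2^2/(2b^2)\le S^2/(2b^2).
\]
This produces the first term of Eq.~(\ref{eq:lapl2}).

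\textbf{Step 3: control the fluctuation.} By the triangle inequality, $|z_i-v_i|-|z_i|\in[-|v_i|,|v_i|]$, so each $X_i$ lies in an interval of length $2|v_i|/b$. Hoeffding's inequality then gives
\[
\Pr\bigl[\mathrm{PL}-\mathbb{E}\,\mathrm{PL}>t\bigr]\le \exp\!\Bigl(-\frac{2t^2}{\sum_i (2|v_i|/b)^2}\Bigr)=\exp\!\Bigl(-\frac{t^2b^2}{2\|v\|_2^2}\Bigr).
\]
Setting the right-hand side equal to $\delta$ gives
\[
t=\frac{\|v\|_2}{b}\sqrt{2\ln(1/\delta)}\le \frac{S}{b}\sqrt{2\ln(1/\delta)}.
\]

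\textbf{Step 4: conclude.} Combining Steps 2 and 3,
\[
\Pr\bigl[\mathrm{PL}>S^2/(2b^2)+(S/b)\sqrt{2\ln(1/\delta)}\bigr]\le\delta.
\]
Lemma~\ref{lem:tail} then yields Eq.~(\ref{eq:dp}) for the pair $(x,x')$. The symmetric direction, with $x$ and $x'$ swapped, follows from the same argument because the hypothesis $\|g(x)-g(x')\|_2\le S$ is symmetric. No quantity in the argument depends on $d$ except through $\|v\|_2$, so the resulting $\varepsilon$ is dimension-free.

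\textbf{Expected obstacle.} The step that needs care is Hoeffding's range. The naive range bound gives exactly the stated constant $\sqrt{2\ln(1/\delta)}$, with no need for a sub-exponential Bernstein argument. One must also check that the bound is applied with the actual shift $\|v\|_2$ and only then relaxed monotonically to $S$. Beyond that, the argument is routine given Lemmas~\ref{lem:shift} and~\ref{lem:tail}.
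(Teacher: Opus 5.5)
Your proposal is correct and follows the paper's proof step for step: you decompose the privacy loss coordinatewise, bound its mean with Lemma~\ref{lem:shift}, apply Hoeffding with per-coordinate range $2|v_i|/b$, and conclude with Lemma~\ref{lem:tail}, including the swap of $x$ and $x'$. The only cosmetic difference is that you solve for $t$ using the actual $\|v\|_2$ and then relax to $S$, whereas the paper relaxes to $S$ inside the Hoeffding exponent first; the two are equivalent, and the degenerate case $v=0$, where $\mathrm{PL}\equiv 0$, is trivial.
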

\begin{proof}[Proof sketch]
The privacy loss decomposes coordinatewise, $\mathrm{PL}=\frac{1}{b}\sum_i\bigl(|z_i-v_i|-|z_i|\bigr)$ with $z\sim\mathrm{Lap}(b)^d$. Each term is independent and bounded in $[-|v_i|/b,\,|v_i|/b]$ by the triangle inequality, and Lemma~\ref{lem:shift} bounds the mean by $\mathbb{E}[\mathrm{PL}]\le\|v\|_2^2/(2b^2)\le S^2/(2b^2)$. Hoeffding's inequality over the $d$ independent terms, whose squared ranges sum to at most $4S^2/b^2$, gives $\Pr[\mathrm{PL}\ge\mathbb{E}[\mathrm{PL}]+\tau]\le\exp\bigl(-\tau^2b^2/(2S^2)\bigr)$. Choosing $\tau=(S/b)\sqrt{2\ln(1/\delta)}$ and applying Lemma~\ref{lem:tail} completes the proof. Full details are in the supplement.
\end{proof}
The dimension cancels because the fluctuation of the privacy loss is controlled by $\|v\|_2$, the same quantity the spectral projection constrains, rather than by $\|v\|_1$. The guarantee holds directly at every shift size, so certification needs no group privacy.

\begin{corollary}[Calibration and certification]
\label{cor:calib}
Enforce $\Delta_{2,2}\le 1$ at the noise layer. To meet a target $(\varepsilon_0,\delta)$ at construction bound $L$, set
\begin{equation}
\label{eq:calib}
b=\frac{L}{u},\qquad
u=\sqrt{2\ln(1/\delta)+2\varepsilon_0}-\sqrt{2\ln(1/\delta)} .
\end{equation}
At any candidate radius $L'$ the deployed mechanism satisfies $(\varepsilon(L'),\delta)$-DP with $\varepsilon(L')=L'^2/(2b^2)+(L'/b)\sqrt{2\ln(1/\delta)}$, which plugs directly into Proposition~\ref{prop:cert}.
\end{corollary}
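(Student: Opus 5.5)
The corollary follows from Theorem~\ref{thm:lapl2} by a sensitivity identification and by inverting Eq.~(\ref{eq:lapl2}) as a quadratic in $S/b$.

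\textbf{Step 1: identify the sensitivity.} Take the pre-noise map $g$ to be the patch embedding with $\Delta_{2,2}\le 1$ enforced. For any input pair with $\|x-x'\|_2\le L'$, linearity gives
\[
\|g(x)-g(x')\|_2\le \Delta_{2,2}\|x-x'\|_2\le L'.
\]
If the convention is that $L$ is a bound in the normalized sensitivity scale of the construction, the identification is the same. Theorem~\ref{thm:lapl2} with $S=L'$ then gives $(\varepsilon(L'),\delta)$-DP with the stated $\varepsilon(L')$, for every $L'$ and without group privacy. This is the second claim.

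\textbf{Step 2: calibration.} Set $t=L/b$ and require $\varepsilon(L)=\varepsilon_0$. Writing $a=\sqrt{2\ln(1/\delta)}$, this reads
\[
\tfrac12 t^2+a t-\varepsilon_0=0 .
\]
Its positive root is
\[
t=-a+\sqrt{a^2+2\varepsilon_0}=\sqrt{2\ln(1/\delta)+2\varepsilon_0}-\sqrt{2\ln(1/\delta)}=u .
\]
The root is positive since $\varepsilon_0>0$, so $u>0$ and $b=L/u$ is well defined. Because $\varepsilon(\cdot)$ is increasing in $t$, any $b\ge L/u$ also meets the target. Substituting back confirms that $\varepsilon(L)=\varepsilon_0$ exactly.

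\textbf{Step 3: plug into Proposition~\ref{prop:cert}.} The PixelDP certification (expected-score stability, $\mathbb{E}[A(x)]\le e^{\varepsilon}\mathbb{E}[A(x')]+\delta$ for scores in $[0,1]$) only needs $(\varepsilon,\delta)$-DP of the randomized scoring function with respect to $\ell_2$ input balls of radius $L'$. Post-processing makes the layers after the noise harmless. So any $L'$ whose pair $(\varepsilon(L'),\delta)$ satisfies the margin condition of Proposition~\ref{prop:cert} is certified.

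The only delicate point is making sure Proposition~\ref{prop:cert} is stated for a per-radius $(\varepsilon,\delta)$ pair rather than only for the construction radius $L$. If it is stated only at $L$, I would note that its proof uses the DP inequality only for pairs inside the ball, so it applies verbatim with $L'$ and $\varepsilon(L')$.
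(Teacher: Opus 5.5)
Your proposal is correct and matches the paper's proof: it sets $S=L'$ from $\Delta_{2,2}\le 1$, applies Theorem~\ref{thm:lapl2}, and solves $\varepsilon_0 = t^2/2 + t\sqrt{2\ln(1/\delta)}$ for its positive root $t=L/b=u$. Your remarks on the positivity of $u$, on monotonicity in $b$, and on instantiating Proposition~\ref{prop:cert} at scale $L'$ are small details the paper leaves implicit.
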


\noindent\textbf{Cost and consequences.}
At $(\varepsilon_0,\delta)=(1,0.05)$, Eq.~(\ref{eq:calib}) gives $b=2.64\,L$ and noise standard deviation $\sqrt{2}\,b=3.73\,L$, against $2.54\,L$ for the Gaussian mechanism, a factor of $1.47$. The Gaussian mechanism therefore remains the efficient choice when only $\ell_2$ robustness is wanted. The value of Theorem~\ref{thm:lapl2} is threefold. It closes the certification gap that the repair opened, turning every spectrally constrained Laplace checkpoint into a certified model by re-running only the certification pass. It replaces the naive factor $\sqrt{d}\approx 110.9$ with no dimension dependence at all, a $42\times$ noise reduction at our layer at $\varepsilon_0=1$. And it explains the shape of the original design space, in which the 2-norm Laplace row was absent for exactly this reason.

\subsection{A Gaussian-DP Calibration Variant}
\label{sec:gdp}
Our tables also report a secondary accounting based on Gaussian differential privacy~\cite{dong2022gaussian}. A Gaussian mechanism with $\ell_2$ sensitivity $S$ and scale $\sigma$ is exactly $\mu$-GDP with $\mu=S/\sigma$. The variant fixes a target $\mu_0$ at construction, sets $\sigma=L/\mu_0$, and certifies at radius $L'$ by converting $\mu(L')=L'/\sigma$ to $(\varepsilon,\delta)$ through the duality of~\cite{dong2022gaussian} at the same $\delta$, then applying Proposition~\ref{prop:cert}. Laplace rows under this accounting match the $\varepsilon$ implied by $(\mu_0,\delta)$ at construction. We emphasize that this fixes the privacy level at the construction bound and is an alternative calibration rather than the group-privacy scaling of the original framework. Exact formulas are in the supplement.

\subsection{Certification Meets Adversarial Training}
\label{sec:at}
Proposition~\ref{prop:cert} depends only on the deployed mechanism, not on how the network was trained, so any training procedure that enlarges the score margins under noise raises certified accuracy with the certificate untouched. Following the margin view of adversarially trained smoothing~\cite{salman2019provably}, we train the noisy network with $\ell_2$ projected gradient descent whose inner attack uses expectation-over-transformation gradients~\cite{athalye2018obfuscated}, and we adopt the warmup and weak augmentation guidance of~\cite{debenedetti2023light}. Two protocol choices matter for a fair reading. First, gains must be measured against a matched control that shares every training hyperparameter except the attack, since recipe differences alone move certified accuracy by several points. Second, on class-imbalanced data the joint objective can collapse to the class prior, because the attack suppresses the slow escape of noisy training from the prior basin. A two-phase schedule, noisy pretraining followed by adversarial fine-tuning from the shared checkpoint, removes the failure and yields the cleanest single-variable comparison. Section~\ref{sec:exp_at} reports both phenomena.

\section{Experiments}
\label{sec:experiments}

\subsection{Setup}
\label{sec:setup}
\textbf{Datasets and models.}
We evaluate on CIFAR-10, CIFAR-100~\cite{krizhevsky2009learning}, SVHN~\cite{netzer2011reading}, and ImageNet~\cite{deng2009imagenet}. The ViT experiments use ViT-Tiny with patch size $4$ at resolution $32$, trained from scratch, with noise after the patch embedding. CIFAR models train for $400$ epochs with RandAugment and mixup, SVHN for $300$ epochs with RandAugment only. The convolutional reference is ResNet-50 with a CIFAR stem ($3{\times}3$ first convolution), noise after the first convolution. ImageNet follows the autoencoder configuration of~\cite{lecuyer2019certified}, a three-layer tied-weight autoencoder with noise after its first convolution ($10{\times}10{\times}32$, so $C{=}3200$ coefficients per column), trained first, then frozen while a pretrained backbone is fine-tuned on top for $20$k steps. We run this with both Inception-v3~\cite{szegedy2016rethinking} and ViT-Base backbones. All pipelines consume raw $[0,1]$ pixels without input normalization, as the calibration requires.

\textbf{Mechanisms and accounting.}
Unless stated otherwise, noise is calibrated at $(\varepsilon_0,\delta){=}(1,0.05)$ over the grid $L\in\{0.03,0.1,0.3,1\}$ under $\ell_2$ attack ($\ell_1$ for the $\Delta_{1,1}$ Laplace rows). The $f$-DP variant of Sec.~\ref{sec:gdp} uses $\mu_0{=}0.4$. Sensitivity is renormalized to one after every optimizer step and verified at load time before any certification.

\textbf{Measurement protocol.}
Sweep tables report test accuracy under a single noise draw. Certified tables use $n{=}300$ draws, argmax scores, Clopper--Pearson bounds with a union bound at $\eta{=}0.95$, and report the expected prediction. Sweep tables report test accuracy under a single noise draw. Certified tables use $n=300$ draws, argmax scores, and Clopper--Pearson bounds with a union bound at $\eta=0.95$. These quantities need not coincide, and we do not mix the two protocols within a table.


\begin{table*}[!t]
\centering
\small
\setlength{\tabcolsep}{5pt} 
\caption{Accuracy (\%) of ResNet-50 and ViT under standard DP and
$f$-DP accounting. Gaussian noise uses the $\Delta_{2,2}$ constraint, while Laplace noise uses either the inherited grouped $\bar{\Delta}_{1,1}$ bound or the spectral $\Delta_{2,2}$ constraint. The ViT results correspond
to the 400-epoch runs. Values in parentheses denote the corresponding
non-private baseline accuracies. SVHN chance level is the
majority-class prior of $19.60\%$.}
\label{tab:sweep}
\resizebox{\textwidth}{!}{
\begin{tabular}{lllcccccccc}
\toprule
\multirow{2}{*}{Dataset}
& \multirow{2}{*}{Backbone}
& \multirow{2}{*}{Mechanism}
& \multicolumn{2}{c}{$L{=}0.03$}
& \multicolumn{2}{c}{$L{=}0.10$}
& \multicolumn{2}{c}{$L{=}0.30$}
& \multicolumn{2}{c}{$L{=}1.00$} \\
\cmidrule(lr){4-5}
\cmidrule(lr){6-7}
\cmidrule(lr){8-9}
\cmidrule(lr){10-11}
& & &
DP & $f$-DP &
DP & $f$-DP &
DP & $f$-DP &
DP & $f$-DP \\
\midrule

\multirow{5}{*}{CIFAR-10}
& \multirow{2}{*}{\shortstack[l]{ResNet-50\\(94.48)}}
& Gaussian, $\Delta_{2,2}$
& 91.62 & 91.65
& 82.99 & 83.48
& 68.10 & 68.13
& 45.98 & 46.29 \\
&
& Laplace, $\bar{\Delta}_{1,1}$
& 85.26 & 78.37
& 79.19 & 66.06
& 65.26 & 51.86
& 49.38 & 28.91 \\
\cmidrule(lr){2-11}
&
\multirow{3}{*}{\shortstack[l]{ViT\\(92.92)}}
& Gaussian, $\Delta_{2,2}$
& 84.81 & 84.00
& 70.17 & 70.79
& 42.98 & 42.94
& 21.58 & 22.33 \\
&
& Laplace, $\bar{\Delta}_{1,1}$
& 28.87 & 19.21
& 18.48 & 12.26
& 11.67 & 10.50
& 10.38 & 10.35 \\
&
& Laplace, $\Delta_{2,2}$
& 85.50 & 81.68
& 80.28 & 56.69
& 54.66 & 33.41
& 31.31 & 18.44 \\

\midrule

\multirow{5}{*}{CIFAR-100}
& \multirow{2}{*}{\shortstack[l]{ResNet-50\\(73.28)}}
& Gaussian, $\Delta_{2,2}$
& 68.66 & 68.96
& 57.47 & 57.01
& 39.47 & 40.39
& 22.25 & 22.43 \\
&
& Laplace, $\bar{\Delta}_{1,1}$
& 53.64 & 43.60
& 44.54 & 36.59
& 36.04 & 22.99
& 21.74 & 9.36 \\
\cmidrule(lr){2-11}
&
\multirow{3}{*}{\shortstack[l]{ViT\\(72.79)}}
& Gaussian, $\Delta_{2,2}$
& 56.18 & 55.99
& 39.67 & 40.47
& 19.85 & 19.97
& 5.30  & 5.48 \\
&
& Laplace, $\bar{\Delta}_{1,1}$
& 7.51 & 3.58
& 3.39 & 1.10
& 1.22 & 1.05
& 1.08 & 1.04 \\
&
& Laplace, $\Delta_{2,2}$
& 59.92 & 53.89
& 52.17 & 29.71
& 29.71 & 12.70
& 10.96 & 3.60 \\

\midrule

\multirow{5}{*}{SVHN}
& \multirow{2}{*}{\shortstack[l]{ResNet-50\\(97.12)}}
& Gaussian, $\Delta_{2,2}$
& 96.72 & 96.73
& 94.91 & 95.00
& 85.23 & 85.42
& 50.79 & 51.41 \\
&
& Laplace, $\bar{\Delta}_{1,1}$
& 96.76 & 95.73
& 94.18 & 88.95
& 88.48 & 19.59
& 19.60 & 19.60 \\
\cmidrule(lr){2-11}
&
\multirow{3}{*}{\shortstack[l]{ViT\\(97.01)}}
& Gaussian, $\Delta_{2,2}$
& 95.17 & 95.36
& 83.53 & 81.08
& 19.59 & 41.86
& 19.59 & 19.59 \\
&
& Laplace, $\bar{\Delta}_{1,1}$
& 19.59 & 19.59
& 19.59 & 19.60
& 19.60 & 19.59
& 19.60 & 19.59 \\
&
& Laplace, $\Delta_{2,2}$
& 96.36 & 93.13
& 92.18 & 64.16
& 60.09 & 19.60
& 19.59 & 19.59 \\

\bottomrule
\end{tabular}}
\end{table*}

\subsection{Main Result}
\label{sec:exp_repair}
Table~\ref{tab:sweep} shows the phenomenon and the causal test together. Under the faithful port the $\Delta_{1,1}$ Laplace configuration is destroyed at the smallest noise level, falling to $28.9\%$ on CIFAR-10, $7.5\%$ on CIFAR-100, and exactly the majority-class prior on SVHN at every level. The same Laplace mechanism, the same noise scale, and the same training recipe under the $\Delta_{2,2}$ constraint recover to within a few points of the Gaussian arm, and on CIFAR-10 and CIFAR-100 exceed it at small $L$, consistent with the smaller standard deviation the Laplace mechanism injects. The convolutional reference completes the picture. On ResNet-50, whose first convolution has only $C{=}576$ coefficients per column, the $\Delta_{1,1}$ Laplace arm trains normally on all three datasets (Table~\ref{tab:sweep}), exactly as the norm-ratio law predicts for small $C$. An absorption control with noise after the first attention block instead of the patch embedding leaves SVHN accuracy at the baseline for every $L$ (supplement), confirming that placement, not architecture, determines whether the mechanism binds. 

\begin{figure}[!t]
\centering
\includegraphics[width=0.5\textwidth]{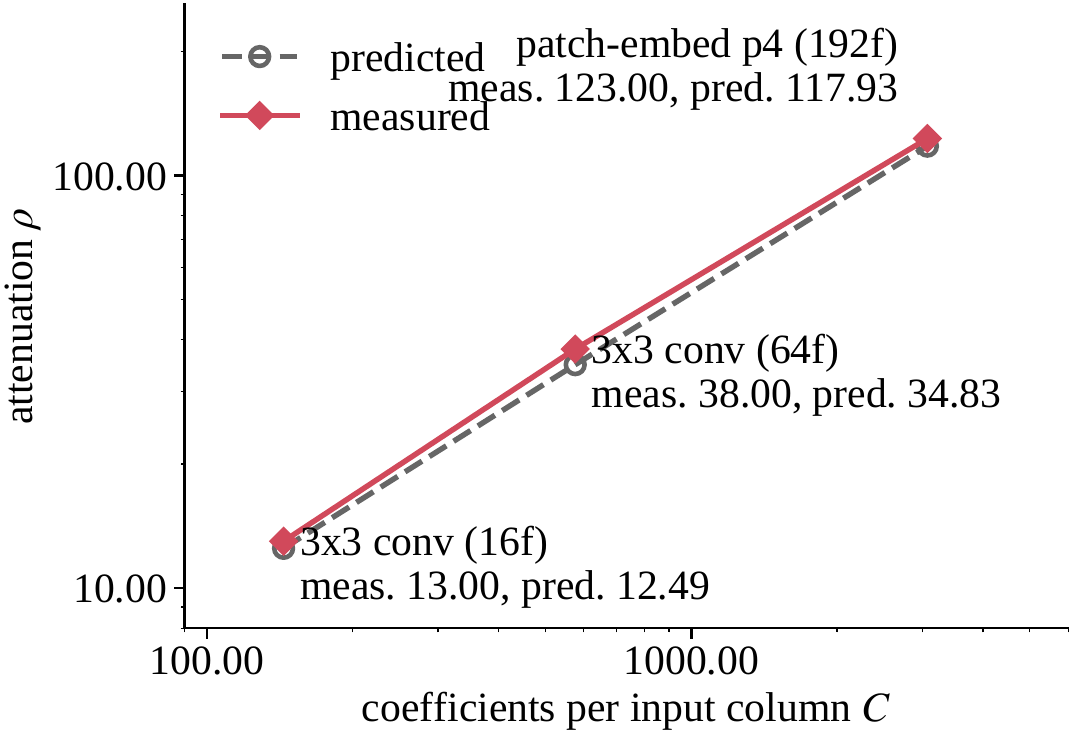}
\caption{The norm-ratio law. Predicted attenuation $C\sqrt{2/\pi}/(\sqrt{M}+\sqrt{N})$ against direct measurement at the three first layers of our study.}
\label{fig:law}
\end{figure}
\par Figure~\ref{fig:law} compares the predicted and measured norm ratios for three first-layer configurations. Using the displayed values, the relative errors with respect to the measurements are approximately 3.9\%, 8.3\%, and 4.1\%. The ImageNet results in Table~\ref{tab:imagenet} provide a separate accuracy-based evaluation of a front end with $C=3200$.
\begin{table}[!t]
\centering
\small
\setlength{\tabcolsep}{1.5pt}
\caption{ImageNet accuracy (\%) under standard DP and $f$-DP
accounting for the autoencoder configuration with two backbones.
Gaussian noise remains relatively stable across the two accounting
schemes, whereas the $\Delta_{1,1}$ Laplace degradation is substantially
more severe under $f$-DP and persists at $C{=}3200$. Values in
parentheses denote the corresponding non-private baseline accuracies.}
\label{tab:imagenet}
\begin{tabular}{llcccc}
\toprule
Backbone & Mechanism
& $L{=}0.03$ & $L{=}0.10$ & $L{=}0.30$ & $L{=}1.00$ \\
\midrule
\multirow{4}{*}{\shortstack[l]{Inception-v3\\(77.81)}}
& DP Gaussian, $\Delta_{2,2}$
& 76.11 & 73.70 & 68.38 & 50.86 \\
& $f$-DP Gaussian, $\Delta_{2,2}$
& 76.30 & 73.55 & 69.34 & 51.00 \\
& DP Laplace, $\bar{\Delta}_{1,1}$
& 58.20 & 36.71 & 13.99 & 2.74 \\
& $f$-DP Laplace, $\bar{\Delta}_{1,1}$
& 40.66 & 14.51 & 3.51 & 0.64 \\
\midrule
\multirow{4}{*}{\shortstack[l]{ViT-Base\\(81.14)}}
& DP Gaussian, $\Delta_{2,2}$
& 79.32 & 76.95 & 71.73 & 51.83 \\
& $f$-DP Gaussian, $\Delta_{2,2}$
& 79.10 & 76.87 & 71.52 & 52.19 \\
& DP Laplace, $\bar{\Delta}_{1,1}$
& 60.50 & 35.92 & 11.72 & 1.89 \\
& $f$-DP Laplace, $\bar{\Delta}_{1,1}$
& 40.38 & 12.13 & 2.47 & 0.55 \\
\bottomrule
\end{tabular}
\end{table}


\subsection{ImageNet at Scale}
Table~\ref{tab:imagenet} reports the faithful ImageNet configuration. The Inception baseline of $77.8\%$ matches the $77.5\%$ of the original paper, validating the pipeline. Two observations follow. First, the accuracy drops of the two backbones are nearly identical at every noise level, for instance $-4.1$ against $-4.2$ points for Gaussian at $L{=}0.1$ and $-27.0$ against $-29.3$ at $L{=}1$. The certificate lives in the noisy front end and the backbone is post-processing, so a stronger frozen backbone shifts the whole curve up without changing its shape, and ViT-Base dominates Inception at every operating point. Second, the $\Delta_{1,1}$ Laplace collapse reappears at scale on the configuration the original paper itself prescribes, whose first convolution holds $3200$ coefficients per column, extending the fan-in series of Fig.~\ref{fig:law} to a fourth architecture and a fourth dataset.

\subsection{Certified Accuracy of the Repaired Laplace Mechanism}
\label{sec:exp_cert}
Corollary~\ref{cor:calib} turns every $\Delta_{2,2}$ Laplace checkpoint of Table~\ref{tab:sweep} into a certified model by re-running only the certification pass with the new $\varepsilon(L')$ map. Table~\ref{tab:lapcert} reports the protocol. The certified columns are produced by the released code and commands in the supplement.
\begin{table}[!t]
\centering\small
\setlength{\tabcolsep}{5pt}
\caption{Certified accuracy (\%) of repaired Laplace
models at the nominal construction setting $L=0.1$.
All columns, including $T=0$, use the Monte Carlo
certification protocol with $n=300$ draws and
Theorem~\ref{thm:lapl2}.
DP/$f$-DP row labels identify the checkpoint
configuration, not different certification theorems.}
\label{tab:lapcert}
\begin{tabular}{lccc}
\toprule
 & $T{=}0$ & $T{=}0.02$ & $T{=}0.03$ \\
\midrule
CIFAR-10 (DP) & 77.19 & 69.12 & 64.34 \\
CIFAR-10 ($f$-DP)  & 55.57 & 51.76 & 50.66 \\
CIFAR-100 (DP) & 55.01 & 48.21 & 46.59 \\
CIFAR-100 ($f$-DP) & 28.22 & 25.20 & 24.48 \\
SVHN (DP)      & 95.02 & 92.47 & 91.60 \\
SVHN ($f$-DP)      & 76.68 & 72.44 & 71.19 \\
\bottomrule
\end{tabular}
\end{table}
The certified radii of the Laplace route are necessarily smaller than the Gaussian ones at matched $(\varepsilon_0,\delta)$, both because the calibrated noise is $1.47\times$ larger and because $\varepsilon(L')$ grows superlinearly, and we report them as the honest price of certifying the heavier-tailed mechanism rather than as a new state of the art. 

\section{Conclusion}
\label{sec:conclusion}
We ported PixelDP to Vision Transformers and found that the port is a stress test the original framework had never received. The placement analysis pins the noise to the patch embedding, the inherited grouped $\ell_1$ Laplace configuration collapses there, stride-aware control recovers much of the lost accuracy, and a closed-form norm-ratio law explains the grouped-bound attenuation across architectures. The repair created a certification gap, and the certified $(\varepsilon,\delta)$ guarantee for the Laplace mechanism under $\ell_2$ sensitivity closes it with a self-contained proof and a closed-form calibration, at an honestly quantified noise cost against the Gaussian mechanism. ImageNet experiments with a fixed autoencoder front end and two downstream backbones illustrate the post-processing property empirically, and a matched-budget ablation shows that adversarial training improves
certified accuracy over the nonzero operating region.

\textbf{Limitations.}
Our ViT models are small and trained from scratch at low resolution, so absolute accuracies sit below modern large-scale recipes, and the study isolates mechanisms rather than chasing state of the art. The Laplace certificate costs about $1.47\times$ the Gaussian noise at matched $(\varepsilon_0,\delta)$, so it removes an inconsistency in the design space rather than improving the efficient frontier. The GDP rows are an alternative calibration that fixes the privacy level at the construction bound and should not be read as the group-privacy scaling of the original framework. Training results are single-seed, so we do not quantify optimization variance. We therefore restrict our empirical claim to the large failure-and-repair effects reproduced across datasets and architectures, rather than small differences between individual configurations. We study $\ell_1$ and $\ell_2$ threats only, and the interaction of the noise layer with $\ell_\infty$ training is untouched.

{
    \small
    \bibliographystyle{ieeenat_fullname}
    \bibliography{main}
}

\appendix
\section{Preliminaries}
\label{sec:prelim}

\subsection{Differential Privacy over a Norm Metric}
Differential privacy bounds how much a randomized computation can change when its input changes a little. PixelDP uses the definition over a general metric~\cite{chatzikokolakis2013broadening}, taking the metric to be an $\ell_p$ norm on images rather than record replacement on databases.

\begin{definition}[$(\varepsilon,\delta)$-DP over an $\ell_p$ metric]
\label{def:dp}
A randomized mechanism $M:\mathbb{R}^n\to\mathcal{Y}$ satisfies $(\varepsilon,\delta)$-DP at scale $L$ in $\ell_p$ if for all $x,x'$ with $\|x-x'\|_p\le L$ and all measurable $T\subseteq\mathcal{Y}$,
\begin{equation}
\label{eq:dp}
\Pr[M(x)\in T]\;\le\; e^{\varepsilon}\,\Pr[M(x')\in T]+\delta .
\end{equation}
\end{definition}
Two properties drive everything that follows. Post-processing preserves the guarantee: any function of $M(x)$ is again $(\varepsilon,\delta)$-DP. Bounded outputs have stable expectations, which is the bridge to robustness.

\begin{lemma}[Expected output stability~\cite{lecuyer2019certified}]
\label{lem:stability}
If $M$ satisfies Def.~\ref{def:dp} and $M(x)\in[0,1]$ almost surely, then for all $\|\alpha\|_p\le L$,
$\mathbb{E}[M(x)]\le e^{\varepsilon}\,\mathbb{E}[M(x+\alpha)]+\delta$.
\end{lemma}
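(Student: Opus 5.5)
The claim is Lemma~\ref{lem:stability}: if $M$ satisfies Def.~\ref{def:dp} and $M(x)\in[0,1]$ almost surely, then $\mathbb{E}[M(x)]\le e^{\varepsilon}\,\mathbb{E}[M(x+\alpha)]+\delta$ for all $\|\alpha\|_p\le L$. The plan is to write the expectation of a $[0,1]$-valued random variable as an integral of tail probabilities, apply the DP inequality Eq.~(\ref{eq:dp}) to each tail event, and integrate back. This is the standard argument from PixelDP, so nothing outside Def.~\ref{def:dp} is needed.

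\textbf{Step 1.} Fix $x$ and $\alpha$ with $\|\alpha\|_p\le L$, and put $x'=x+\alpha$, so that $\|x-x'\|_p\le L$ and Def.~\ref{def:dp} applies to the pair. Here $M$ is real-valued in $[0,1]$; in the PixelDP usage it is a post-processed score such as the softmax output of class $k$. Since $M(x)\ge 0$, the layer-cake formula gives
\[
\mathbb{E}[M(x)]=\int_0^1 \Pr[M(x)>t]\,dt .
\]
The upper limit is $1$ because $M(x)\le 1$ almost surely.

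\textbf{Step 2.} For each $t\in[0,1]$, the set $T_t=(t,1]$ is measurable. Eq.~(\ref{eq:dp}) therefore gives
\[
\Pr[M(x)>t]\le e^{\varepsilon}\Pr[M(x')>t]+\delta .
\]

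\textbf{Step 3.} Integrate this inequality over $t\in[0,1]$. The interval has length one, so the $\delta$ term contributes exactly $\delta$. Applying the layer-cake identity to $M(x')$ turns the remaining integral into $e^{\varepsilon}\,\mathbb{E}[M(x+\alpha)]$. This yields the claim.

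The only point that needs care, rather than a real obstacle, is measurability and the use of the tail formula. Both sides are bounded and nonnegative, so Tonelli's theorem justifies the interchange, and the DP guarantee must hold for every measurable $T$, which it does by definition. If $M$ is vector-valued, the same argument is applied coordinatewise, since each coordinate is a post-processing of $M$ and so inherits the $(\varepsilon,\delta)$ guarantee.
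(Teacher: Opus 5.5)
Your proposal is correct and follows essentially the same approach: the paper states this lemma by citation to PixelDP without reproving it, and the cited proof is exactly your layer-cake argument. It writes $\mathbb{E}[M(x)]=\int_0^1\Pr[M(x)>t]\,dt$, applies Eq.~(\ref{eq:dp}) to each tail event $\{M(x)>t\}$ with $x'=x+\alpha$, and integrates over the unit interval so that the additive term contributes exactly $\delta$.
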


\subsection{The PixelDP Construction}
Write the scoring function of a network as $Q=h\circ g$, where $g$ is the computation before a chosen noise layer and $h$ the computation after it. PixelDP constrains the sensitivity of $g$,
\begin{equation}
\Delta_{p,q}\;=\;\sup_{x\ne x'}\frac{\|g(x)-g(x')\|_q}{\|x-x'\|_p},
\end{equation}
and injects noise calibrated to $\Delta_{p,q}$ and to a construction bound $L$, producing the randomized scores $A(x)=h\bigl(g(x)+Z\bigr)$. Two mechanisms are used. The Laplace mechanism draws $Z_i\sim\mathrm{Lap}(b)$ with $b=\Delta_{p,1}L/\varepsilon$ and yields $(\varepsilon,0)$-DP from an $\ell_1$ sensitivity bound. The Gaussian mechanism draws $Z_i\sim\mathcal{N}(0,\sigma^2)$ with $\sigma=\sqrt{2\ln(1.25/\delta)}\,\Delta_{p,2}L/\varepsilon$ and yields $(\varepsilon,\delta)$-DP from an $\ell_2$ bound, for $\varepsilon\le 1$. By post-processing, $A$ inherits the guarantee of the noised layer, provided no skip connection bypasses the noise.

For the non-overlapping ViT patch embedding (kernel size equal to stride), the exact induced $\ell_1\!\to\!\ell_1$ norm of the flattened patch map is
\begin{equation}
\label{eq:delta11} 
\Delta_{1,1}^{\mathrm{exact}}(W)
=
\max_{j,a,b}\sum_m |W_{m,j,a,b}|.
\end{equation}
The original PixelDP-style convolution treatment instead uses the grouped quantity
\begin{equation}
\label{eq:s1} 
\bar{\Delta}_{1,1}(W)
=
\max_j\sum_{m,a,b}|W_{m,j,a,b}|,
\end{equation}
which is a valid but generally conservative upper bound for this non-overlapping patch operator. Our faithful-port experiments retain this inherited bound; consequently, the collapse diagnosed below should be interpreted as a property of this conservative PixelDP constraint rather than of the exact stride-aware $\ell_1\!\to\!\ell_1$ sensitivity.

\subsection{Certified Prediction}
At inference the expectation $\mathbb{E}[A(x)]$ is estimated by Monte Carlo over $n$ noise draws, with per-class confidence bounds $\hat{\mathbb{E}}^{lb},\hat{\mathbb{E}}^{ub}$ obtained from Hoeffding or Clopper--Pearson intervals~\cite{clopper1934use} and a union bound over classes.

\begin{proposition}[Generalized robustness condition~\cite{lecuyer2019certified}]
\label{prop:cert}
Suppose $A$ satisfies Def.~\ref{def:dp} at scale $L$ in $\ell_p$. If for some class $k$
\begin{equation}
\label{eq:prop2}
\hat{\mathbb{E}}^{lb}\bigl(A_k(x)\bigr)\;>\;e^{2\varepsilon}\max_{i\ne k}\hat{\mathbb{E}}^{ub}\bigl(A_i(x)\bigr)+\bigl(1+e^{\varepsilon}\bigr)\delta,
\end{equation}
then the prediction $\arg\max_k \hat{\mathbb{E}}(A_k(x))$ is robust to every perturbation of $\ell_p$ size at most $L$, with probability at least $\eta$ over the estimation.
\end{proposition}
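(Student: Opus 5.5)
The plan is the standard PixelDP argument: move the expected score of each class from $x$ to a perturbed input $x'=x+\alpha$ using Lemma~\ref{lem:stability} in both directions, and control the Monte Carlo estimation error with a single high-probability event. First fix that event, call it $\mathcal{E}$. The per-class confidence intervals are built at a level chosen so that, after the union bound over classes, all of them hold simultaneously with probability at least $\eta$. On $\mathcal{E}$ we therefore have, for every class $i$,
\[
\hat{\mathbb{E}}^{lb}(A_i(x))\le\mathbb{E}[A_i(x)]\le\hat{\mathbb{E}}^{ub}(A_i(x)).
\]
Everything below is deterministic on $\mathcal{E}$, so the final statement holds with probability at least $\eta$.

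Next I transport the expectations. Each score coordinate $A_i$ is a post-processing of the noised layer, so it inherits Def.~\ref{def:dp}. Assuming, as in PixelDP, that the scores are normalized to $[0,1]$ (softmax or argmax one-hot), Lemma~\ref{lem:stability} applies to each coordinate. Fix any $\alpha$ with $\|\alpha\|_p\le L$.
\begin{itemize}
\item For the lower bound on class $k$, apply the lemma to the pair $(x,x')$:
\[
\mathbb{E}[A_k(x)]\le e^{\varepsilon}\mathbb{E}[A_k(x')]+\delta,
\quad\text{so}\quad
\mathbb{E}[A_k(x')]\ge e^{-\varepsilon}\bigl(\mathbb{E}[A_k(x)]-\delta\bigr).
\]
\item For the upper bound on each $i\ne k$, apply it to the reversed pair $(x',x)$, which is legitimate because $\|-\alpha\|_p\le L$ and the definition quantifies over all pairs within distance $L$:
\[
\mathbb{E}[A_i(x')]\le e^{\varepsilon}\mathbb{E}[A_i(x)]+\delta.
\]
\end{itemize}

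Combining these, $\mathbb{E}[A_k(x')]>\mathbb{E}[A_i(x')]$ holds whenever
\[
e^{-\varepsilon}\bigl(\mathbb{E}[A_k(x)]-\delta\bigr)>e^{\varepsilon}\mathbb{E}[A_i(x)]+\delta.
\]
Multiplying by $e^{\varepsilon}$, this is equivalent to
\[
\mathbb{E}[A_k(x)]>e^{2\varepsilon}\mathbb{E}[A_i(x)]+(1+e^{\varepsilon})\delta.
\]
On $\mathcal{E}$ this follows from Eq.~(\ref{eq:prop2}): its left side is at most $\mathbb{E}[A_k(x)]$, and $\max_{i\ne k}\hat{\mathbb{E}}^{ub}(A_i(x))$ is at least each $\mathbb{E}[A_i(x)]$. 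Hence $k$ is the strict argmax of the expected scores at every $x'$ in the $\ell_p$ ball of radius $L$.

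Finally I check that $k$ is the class actually reported at $x$. The same estimation event gives $\hat{\mathbb{E}}^{lb}(A_k(x))\le\hat{\mathbb{E}}(A_k(x))$ and $\hat{\mathbb{E}}(A_i(x))\le\hat{\mathbb{E}}^{ub}(A_i(x))$. Since $e^{2\varepsilon}\ge1$ and $\delta\ge0$, Eq.~(\ref{eq:prop2}) then yields $\hat{\mathbb{E}}(A_k(x))>\hat{\mathbb{E}}(A_i(x))$ for all $i\ne k$, so the reported argmax is $k$.

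I expect the main obstacle to be definitional rather than computational. I must pin down that \emph{robust} means the argmax of the true expected scores is constant on the ball, since the estimator itself is not re-run at $x'$. I must also confirm that the union bound is what gives simultaneous coverage of all intervals at level $\eta$. The symmetric use of Def.~\ref{def:dp}, with $x'$ as the base point, also needs to be stated explicitly, because the $e^{2\varepsilon}$ factor arises precisely from applying the guarantee once in each direction.
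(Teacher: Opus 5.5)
Your proof is correct and is essentially the standard PixelDP argument that the paper relies on (the paper cites this proposition without proving it): apply Lemma~\ref{lem:stability} once in each direction to get the $e^{2\varepsilon}$ and $(1+e^{\varepsilon})\delta$ terms, then pass to the empirical bounds on the simultaneous-coverage event. One small correction: the containment $\hat{\mathbb{E}}^{lb}\le\hat{\mathbb{E}}\le\hat{\mathbb{E}}^{ub}$ in your last step does not follow from the estimation event; it holds deterministically because Clopper--Pearson and Hoeffding intervals always contain the point estimate.
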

Given a deployed noise scale, the largest $L$ for which Eq.~(\ref{eq:prop2}) holds is returned as the robustness certificate of the input, searching over $L$ through the map $L\mapsto\varepsilon(L)$ of the mechanism. Certified accuracy at threshold $T$ is the fraction of test points that are correct and certified at radius at least $T$.

\subsection{Relation to Randomized Smoothing}
Randomized smoothing~\cite{cohen2019certified} is the input-noise special case of this construction with Gaussian noise, for which a tighter certificate is available. Placing the noise after a trained layer, which is what makes PixelDP flexible, is also what creates the interaction we study, because the sensitivity constraint then reshapes the layer that carries the signal.

\section{Proofs}
\label{app:proofs}

\subsection{Proof of Proposition~\ref{prop:collapse} (norm-ratio law)}
Let $W$ have i.i.d.\ $\mathcal{N}(0,\tau^2)$ entries. Its flattened patch operator has size $M\times N$ with $N=C_{\rm in}k^2$. For each input channel, the inherited grouped bound aggregates $C=Mk^2$ weights across output channels and spatial kernel locations.

\paragraph{Grouped $\ell_1$ mass.} For any fixed input-channel group, the column mass $f(w)=\sum_{i=1}^{C}|w_i|$ is $\sqrt{C}$-Lipschitz with respect to the Euclidean norm, since
\[
|f(w)-f(w')|
\le \|w-w'\|_1
\le \sqrt{C}\|w-w'\|_2.
\]
For $w=\tau g$ with $g\sim\mathcal N(0,I)$, Gaussian concentration therefore gives
\[
\Pr\!\left[
f(w)<\mathbb E f(w)-t
\right]
\le
\exp\!\left(-\frac{t^2}{2C\tau^2}\right).
\]

\emph{Spectral norm.} By the Davidson--Szarek bound~\cite{vershynin2018high}, for a Gaussian matrix $\Pr[\sigma_{\max}>\tau(\sqrt{M}+\sqrt{N}+t)]\le e^{-t^2/2}$. Setting this to $\delta_2$ gives $\sigma_{\max}\le\tau(\sqrt{M}+\sqrt{N}+\sqrt{2\ln(1/\delta_2)})$ with probability $1-\delta_2$.

\emph{Ratio.} Dividing the two bounds and applying a union bound proves Eq.~(\ref{eq:rho_divice}). The common $\tau$ cancels, so the ratio is scale free, and the leading order is $\rho=\Omega\bigl(C/(\sqrt{M}+\sqrt{N})\bigr)$. For a square-ish matricization $N=\Theta(M)$ this is $\Theta(C/\sqrt{M})=\Theta(\sqrt{C}\,k)$, increasing in both the filter count and the kernel size. \qed

\subsection{Proof of Lemma~\ref{lem:shift} (shifted absolute moment)}
For $z\sim\mathrm{Lap}(0,b)$ with density $\frac{1}{2b}e^{-|z|/b}$ and $v\ge0$ (the case $v<0$ is symmetric),
\[
\mathbb{E}\,|z-v|=\int_{-\infty}^{\infty}\!|z-v|\tfrac{1}{2b}e^{-|z|/b}\,dz .
\]
Splitting at $z=0$ and $z=v$ and integrating each piece in closed form gives $\mathbb{E}\,|z-v|=v+b\,e^{-v/b}$. Since $\mathbb{E}\,|z|=b$,
\[
\mathbb{E}\bigl[|z-v|-|z|\bigr]=v-b+b\,e^{-v/b}=b\,\varphi(v/b),\quad \varphi(t)=t-1+e^{-t}.
\]
Finally $\varphi(t)\le t^2/2$ for all $t\ge0$, because $\psi(t)=t^2/2-\varphi(t)$ satisfies $\psi(0)=0$ and $\psi'(t)=t-1+e^{-t}=\varphi(t)\ge0$. Substituting $t=|v|/b$ gives $\mathbb{E}[|z-v|-|z|]\le v^2/(2b)$. \qed

\subsection{Proof of Lemma~\ref{lem:tail} (tail implies DP)}
Let $S_\varepsilon=\{y:\mathrm{PL}(y)\le\varepsilon\}$. For any event $T$,
\[
\Pr_{M(x)}[T]=\Pr_{M(x)}[T\cap S_\varepsilon]+\Pr_{M(x)}[T\cap S_\varepsilon^c].
\]
On $S_\varepsilon$ the density ratio is at most $e^{\varepsilon}$, so the first term is at most $e^{\varepsilon}\Pr_{M(x')}[T]$. The second term is at most $\Pr_{M(x)}[S_\varepsilon^c]\le\delta$ by hypothesis. Adding gives Eq.~(\ref{eq:dp}). \qed

\subsection{Proof of Theorem~\ref{thm:lapl2}}
Write
\[
v=g(x')-g(x), \qquad \|v\|_2\le S,
\]
and let $y=g(x)+z$ with
$z\sim\operatorname{Lap}(b)^d$. The privacy loss is
\[
\mathrm{PL}=\sum_{i=1}^d\frac{|y_i-g_i(x')|-|y_i-g_i(x)|}{b}
=\frac{1}{b}\sum_{i=1}^d\bigl(|z_i-v_i|-|z_i|\bigr).
\]
They are independent. By the triangle inequality each lies in $[-|v_i|,|v_i|]/b$, so the $i$-th term has range $2|v_i|/b$ and $\sum_i(2|v_i|/b)^2=4\|v\|_2^2/b^2\le 4S^2/b^2$. By Lemma~\ref{lem:shift},
\[
\mathbb{E}[\mathrm{PL}]=\frac{1}{b}\sum_i b\,\varphi(|v_i|/b)
\le\frac{1}{2b^2}\sum_i v_i^2=\frac{\|v\|_2^2}{2b^2}\le\frac{S^2}{2b^2}.
\]
Hoeffding's inequality for bounded independent variables gives, for $\tau>0$,
\[
\Pr[\mathrm{PL}\ge\mathbb{E}[\mathrm{PL}]+\tau]
\le\exp\!\Bigl(\frac{-2\tau^2}{\sum_i(2|v_i|/b)^2}\Bigr)
\le\exp\!\Bigl(\frac{-\tau^2 b^2}{2S^2}\Bigr).
\]
Set $\tau=(S/b)\sqrt{2\ln(1/\delta)}$, so the right side equals $\delta$. Then with probability at least $1-\delta$,
\[
\mathrm{PL}\le\frac{S^2}{2b^2}+\frac{S}{b}\sqrt{2\ln(1/\delta)}=\varepsilon.
\]
Lemma~\ref{lem:tail} converts this tail bound into $(\varepsilon,\delta)$-DP. No step depends on $d$. Because $(x,x')$ was an arbitrary ordered neighboring pair,
the same argument applies after exchanging $x$ and $x'$.
Thus the bound holds for every ordered pair required by
Definition~\ref{def:dp}. \qed

\subsection{Proof of Corollary~\ref{cor:calib} (calibration)}
Enforcing $\Delta_{2,2}\le1$ makes $S=L$ at construction bound $L$. Writing $r=\sqrt{2\ln(1/\delta)}$ and solving $\varepsilon_0=L^2/(2b^2)+(L/b)r$ for $L/b$ as a quadratic gives $L/b=\sqrt{r^2+2\varepsilon_0}-r=u$, hence $b=L/u$. At a candidate radius $L'$ the sensitivity is $S=L'$, and substituting into Eq.~(\ref{eq:lapl2}) yields $\varepsilon(L')=L'^2/(2b^2)+(L'/b)r$. \qed

\section{Gaussian-DP Calibration Formulas}
\label{app:gdp}
A Gaussian mechanism with $\ell_2$ sensitivity $S$ and scale $\sigma$ is $\mu$-GDP with $\mu=S/\sigma$~\cite{dong2022gaussian}. The dual $(\varepsilon,\delta)$ profile is
\[
\delta(\varepsilon;\mu)=\Phi\!\Bigl(-\frac{\varepsilon}{\mu}+\frac{\mu}{2}\Bigr)-e^{\varepsilon}\Phi\!\Bigl(-\frac{\varepsilon}{\mu}-\frac{\mu}{2}\Bigr),
\]
with $\Phi$ the standard normal CDF. Fixing $\mu_0$ at construction sets $\sigma=L/\mu_0$. At radius $L'$ we have $\mu(L')=L'/\sigma=\mu_0 L'/L$, and inverting $\delta(\varepsilon;\mu(L'))=\delta$ for $\varepsilon$ gives the value fed to Proposition~\ref{prop:cert}. This fixes the privacy level at the construction bound rather than scaling it by group privacy.

For the Laplace rows, let $\epsilon_0^{(\mu)}$ denote the solution of
\[
\delta
=
\Phi\!\left(
-\frac{\epsilon_0^{(\mu)}}{\mu_0}+\frac{\mu_0}{2}
\right)
-
e^{\epsilon_0^{(\mu)}}
\Phi\!\left(
-\frac{\epsilon_0^{(\mu)}}{\mu_0}-\frac{\mu_0}{2}
\right).
\]
For the grouped $\ell_1$ Laplace mechanism we set
\[
b=\frac{L}{\epsilon_0^{(\mu)}}.
\]
For the spectrally constrained Laplace mechanism, Corollary~3.2 gives
\[
b=\frac{L}{u_0},
\qquad
u_0=
\sqrt{2\ln(1/\delta)+2\epsilon_0^{(\mu)}}
-
\sqrt{2\ln(1/\delta)}.
\]
At radius $L'$, its privacy parameter is
\[
\epsilon(L')
=
\frac{L'^2}{2b^2}
+
\frac{L'}{b}\sqrt{2\ln(1/\delta)}.
\]

\paragraph{Attention-side absorption control.}
Figure~\ref{fig:svhn-attention-absorption} complements the placement
analysis of Sec.~\ref{sec:placement}. When noise is injected after the first attention
block, test accuracy remains close to the non-private baseline across
the entire construction grid, under both DP and $f$-DP accounting and
for both noise distributions. The absence of systematic degradation as
$L$ increases indicates that training absorbs the perturbation at this
location. This behavior should not be interpreted as enhanced
robustness: because the pre-noise computation crosses an attention block and is not assigned a certified pixel-space sensitivity bound under our protocol, these
models provide no valid PixelDP certificate.
\begin{figure*}[t]
    \centering
    \includegraphics[width=\linewidth]{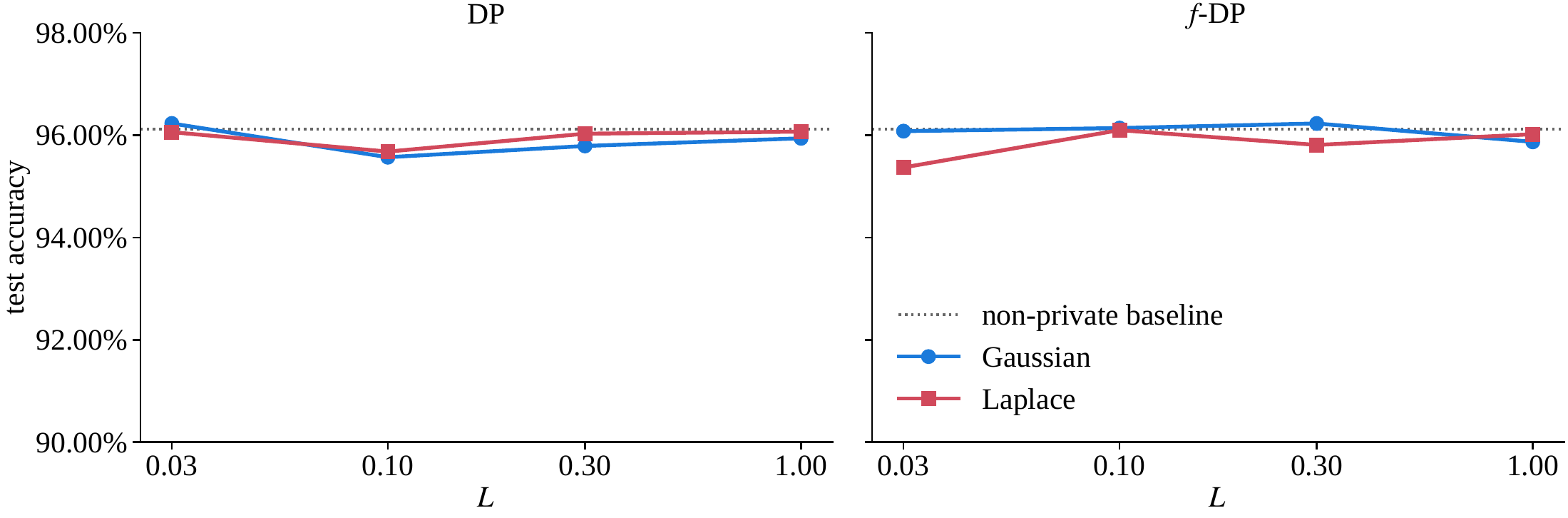}
    \caption{Attention-side absorption control on SVHN. Noise is
    injected after the first attention block rather than after the
    patch embedding. Under both standard DP and $f$-DP accounting,
    Gaussian and Laplace models remain close to the non-private
    baseline across the full construction grid. This apparent
    invariance does not yield a robustness guarantee: no enforceable
    pixel-to-representation sensitivity bound is available at this
    placement, so the resulting models are not certifiable.}
    \label{fig:svhn-attention-absorption}
\end{figure*}

\section{Additional Empirical Studies}
\subsection{The constraint-decoupling Study}
Fig.~\ref{fig:repair} summarizes the constraint-decoupling experiment across CIFAR-10, CIFAR-100, and SVHN. At $L=0.03$, the $\Delta_{1,1}$ Laplace configuration reduces accuracy to 28.87\% on CIFAR-10 and 7.51\% on CIFAR-100, and reaches the majority-class prior on SVHN. Replacing $\Delta_{1,1}$ with $\Delta_{2,2}$ substantially improves low-noise performance. This repair does not eliminate high-noise failures: the SVHN model still reaches the class prior at $L=1$. Exact sweep values and the ResNet-50 reference results are provided in Table~\ref{tab:sweep}.
\begin{figure*}[!t]
\centering
\includegraphics[width=\linewidth]{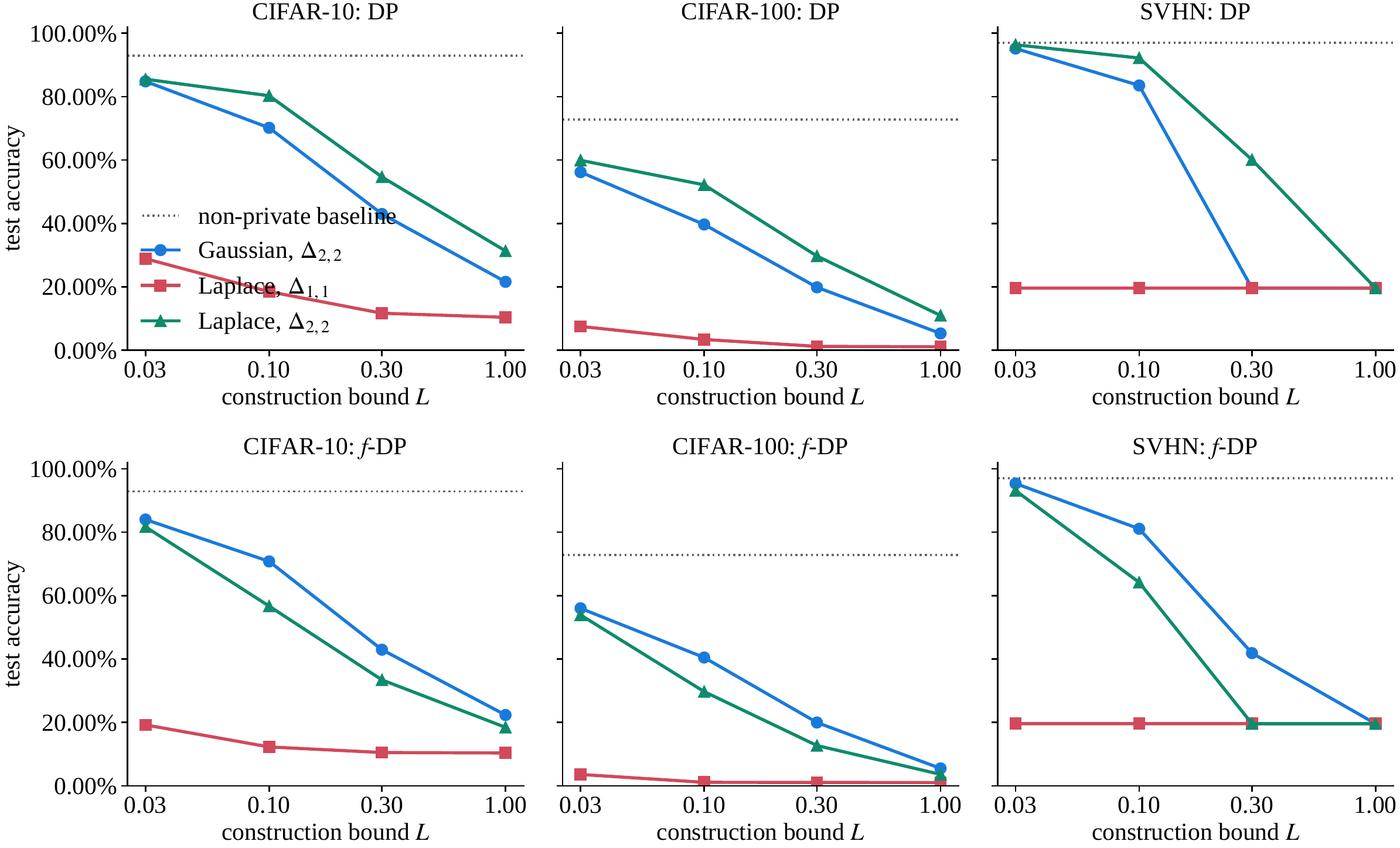}
\caption{Constraint-decoupling diagnostics for ViT-Tiny.
The panels report one-seed test accuracy across
three datasets and two noise-calibration variants.
The spectral constraint substantially mitigates
the low-noise degradation of Laplace models,
although failures remain at large noise levels.
Exact values are provided in
Table~\ref{tab:sweep}.}
\label{fig:repair}
\end{figure*}
\subsection{Sensitivity to Monte Carlo sample size and $\delta$}
Table~\ref{tab:cert_sensitivity} varies the certification sample count and $\delta$ on the same repaired CIFAR-10 checkpoint. Increasing $n$ from 300 to 1,000 changes expected-prediction accuracy by only 0.06 points, but improves certified accuracy by up to 4.65 points over the reported thresholds, consistent with tighter Clopper--Pearson intervals. Reducing $\delta$ from 0.05 to 0.01 has a mixed but predictable effect: it slightly improves the $T=0$ condition, where $\epsilon(0)=0$, but reduces certification at positive radii because $\epsilon(T)$ increases through the $\sqrt{2\ln(1/\delta)}$ term. The main qualitative conclusions are therefore unchanged across these settings.
\begin{table}[!t]
\centering
\caption{Sensitivity of repaired Laplace certification on CIFAR-10 to the number of Monte Carlo samples $n$ and the certification parameter $\delta$. An independently trained $L=0.1$ checkpoint is used for this
sensitivity study; the same checkpoint and the same 1,000 stored
noise draws are used throughout the table.}
\label{tab:cert_sensitivity}
\small
\setlength{\tabcolsep}{4pt}
\begin{tabular}{cc|c|cccc}
\toprule
$n$ & $\delta$ & EP Acc. & $T=0$ & $T=0.02$ & $T=0.03$ & $T=0.05$ \\
\midrule
300  & 0.05 & 81.14 & 77.12 & 69.27 & 64.41 & 51.65 \\
300  & 0.01 & 81.14 & 78.26 & 68.62 & 62.59 & 47.80 \\
1000 & 0.05 & 81.20 & 78.42 & 70.99 & 66.43 & 56.30 \\
1000 & 0.01 & 81.20 & 79.54 & 70.34 & 64.91 & 52.91 \\
\bottomrule
\end{tabular}
\end{table}
\subsection{Norm-ratio evolution during training}
Proposition~\ref{prop:collapse} is derived for i.i.d. Gaussian
weights, so its quantitative prediction is formally an
initialization-scale statement. We therefore track
$\rho=s_1/\sigma_{\max}$ throughout a 400-epoch CIFAR-10 run.
Figure~\ref{fig:rho_training} shows that the Gaussian prediction is
of the correct scale near initialization, but ceases to be
quantitatively tight once optimization induces structure in the
patch embedding: the measured ratio decreases from $136.99$ at
initialization to $53.83$ by epoch 20 and remains near $55$
thereafter, versus the initialization prediction $117.93$.
Importantly, the ratio remains large throughout training, so the
substantial separation between the two constraint geometries
persists even outside the theorem's i.i.d. regime.
\begin{figure*}[!t]
    \centering
    \includegraphics[width=0.8\textwidth]{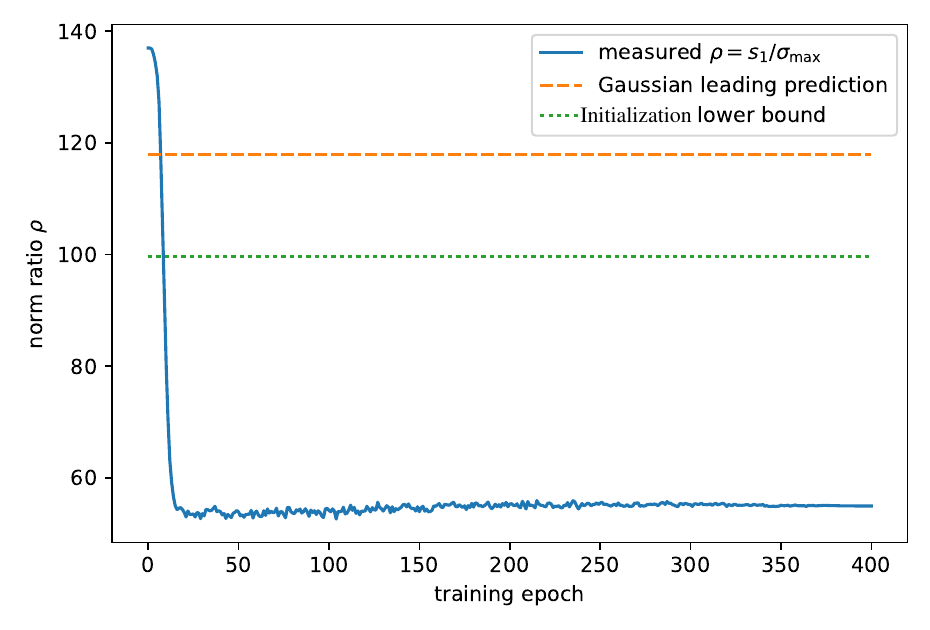}
    \caption{Evolution of the patch-embedding norm ratio
    $\rho=s_1/\sigma_{\max}$ during CIFAR-10 training.
    The dashed line is the leading-order Gaussian prediction, while the dotted line is the initialization lower bound. The prediction is
    initialization-scale rather than a theorem for trained weights:
    optimization rapidly moves the layer outside the i.i.d.\ regime,
    after which $\rho$ stabilizes near $55$. The persistent
    $\rho\gg1$ nevertheless preserves the qualitative separation
    between the inherited $\Delta_{1,1}$ and spectral constraints.}
    \label{fig:rho_training}
\end{figure*}

\paragraph{Exact stride-aware $\ell_1$ control.}
For the non-overlapping ViT patch embedding, we additionally replace the inherited grouped bound $\bar{\Delta}_{1,1}$ with the exact induced $\ell_1\!\to\!\ell_1$ norm $\Delta^{\rm exact}_{1,1}$ of Eq.~(\ref{eq:delta11}), while keeping the Laplace noise scale and training recipe unchanged. At $L=0.03$, accuracy increases from $28.87\%$ to $70.83\%$, and spectral $\Delta_{2,2}$ control further reaches $85.50\%$
(Table~\ref{tab:exact_l1_control}).
\begin{table}[!t]
\centering
\caption{
Constraint-only control on CIFAR-10/ViT-Tiny with Laplace noise at
$L=0.03$. Only the patch-embedding sensitivity constraint is changed.
}
\label{tab:exact_l1_control}
\small
\setlength{\tabcolsep}{6pt}
\begin{tabular}{lc}
\toprule
Constraint & Test Acc. (\%) \\
\midrule
Inherited grouped $\bar{\Delta}_{1,1}$ & 28.87 \\
Exact stride-aware $\Delta^{\rm exact}_{1,1}$ & 70.83 \\
Spectral $\Delta_{2,2}$ & 85.50 \\
\bottomrule
\end{tabular}
\end{table}

On the final exact-norm checkpoint,
$\Delta^{\rm exact}_{1,1}=1.00$,
$\bar{\Delta}_{1,1}=14.14$, and
$\sigma_{\max}=0.290$, giving
$\rho_{\rm exact}=3.45$ and
$\rho_{\rm grouped}=48.78$.
\subsection{Comparison with input randomized smoothing}
\label{app:rs}
On CIFAR-10, we additionally train a standard input Gaussian randomized-smoothing baseline using the same ViT-Tiny architecture and 400-epoch training budget (Table~\ref{tab:rs_baseline}). At $\sigma=0.2537$, its Monte Carlo prediction accuracy (81.57\%) is nearly identical to that of the repaired Laplace model (81.14\%), but its specialized smoothing certificate is substantially tighter: CA@0.05 is 75.75\% versus 51.65\%. This gap is expected and clarifies the scope of our contribution. Our goal is not to outperform input smoothing for pure $\ell_2$ certification; rather, PixelDP permits noise to be placed after an internal representation, and our analysis characterizes the sensitivity-constraint interactions introduced by that flexibility.
\begin{table}[!t]
\centering
\caption{Comparison with standard input Gaussian randomized smoothing
(RS) on CIFAR-10. Both use ViT-Tiny and matched training budgets.
RS uses $\sigma=0.2537$ Gaussian noise at the image input and its
standard smoothing certificate; repaired Laplace injects noise after
the patch embedding and uses the corresponding PixelDP-style
certificate. Pred.\ Acc.\ denotes the Monte Carlo expected prediction.}
\label{tab:rs_baseline}
\footnotesize
\setlength{\tabcolsep}{1pt}
\begin{tabular}{lccccc}
\toprule
Method & Pred.\ Acc. & $T=0$ & $T=0.02$ & $T=0.03$ & $T=0.05$ \\
\midrule
Input Gaussian RS
& 81.57 & 78.72 & 77.63 & 77.01 & 75.75 \\
Repaired Laplace, $\Delta_{2,2}$
& 81.14 & 77.12 & 69.27 & 64.41 & 51.65 \\
\bottomrule
\end{tabular}
\end{table}

\subsection{Adversarial Training Lifts the Certificate}
\label{sec:exp_at}
\begin{table}[!t]
\centering
\small
\setlength{\tabcolsep}{2pt}
\caption{Certified accuracy (\%) on CIFAR-10 at $L{=}0.1$ with
Gaussian noise. The matched noise-only control shares all
hyperparameters with the AT arm except for the adversarial attack.
The strong recipe uses 400 epochs with mixup and is included as a
reference rather than a matched control.}
\label{tab:at-cifar10}
\begin{tabular}{lccccc}
\toprule
Training setting
& $T{=}0.00$ & $T{=}0.05$ & $T{=}0.10$
& $T{=}0.20$ & $T{=}0.30$ \\
\midrule
Matched noise only
& 64.65 & 45.04 & 32.67 & 0.00 & 0.00 \\
AT (PixelDP)
& 66.62 & 53.34 & 43.37 & 0.00 & 0.00 \\
AT ($f$-DP)
& 66.42 & 53.28 & 43.02 & 0.00 & 0.00 \\
Strong recipe
& 74.30 & 57.66 & 45.27 & 0.00 & 0.00 \\
\bottomrule
\end{tabular}
\end{table}

\begin{table}[!t]
\centering
\tiny
\setlength{\tabcolsep}{1pt}
\caption{Certified accuracy (\%) on CIFAR-100 and SVHN at
$L{=}0.1$ with Gaussian noise. For CIFAR-100, the matched noise-only
control shares all hyperparameters with the AT arm except for the
adversarial attack, while the strong recipe is included as a reference.
For SVHN, the two main branches share a 300-epoch noisy prefix,
followed by either 150 additional noise-only epochs or 150 epochs of
AT fine-tuning.}
\label{tab:at-cifar100-svhn}
\begin{tabular}{llccccc}
\toprule
Dataset & Training setting
& $T{=}0.00$ & $T{=}0.05$ & $T{=}0.10$
& $T{=}0.125$ & $T{=}0.15$ \\
\midrule
\multirow{3}{*}{CIFAR-100}
& Matched noise only
& 31.28 & 18.40 & 12.43 & 9.40  & 5.18 \\
& AT (PixelDP)
& 30.20 & 21.08 & 16.08 & 13.47 & 9.40 \\
& AT ($f$-DP)
& 32.74 & 22.51 & 16.80 & 13.98 & 9.85 \\
& Strong recipe
& 42.96 & 27.63 & 19.93 & 15.80 & 9.98 \\
\midrule
\multirow{3}{*}{SVHN}
& Noise only (PixelDP), 450 epochs
& 83.48 & 63.74 & 46.34 & 36.41 & 27.42 \\
& Noise only ($f$-DP), 450 epochs
& 84.52 & 65.52 & 49.10 & 39.61 & 30.76 \\
& AT fine-tuning (PixelDP), 150 epochs
& 82.90 & 66.55 & 51.32 & 42.40 & 33.51 \\
& AT fine-tuning ($f$-DP), 150 epochs
& 83.42 & 66.79 & 52.48 & 44.10 & 35.56 \\
& Noise only (PixelDP), 300 epochs
& 82.63 & 62.02 & 44.29 & 34.80 & 25.88 \\
& Noise only ($f$-DP), 300 epochs
& 86.08 & 66.38 & 48.74 & 38.47 & 29.12 \\
\bottomrule
\end{tabular}
\end{table}

\begin{figure*}[!t]
\centering
\includegraphics[width=\textwidth]{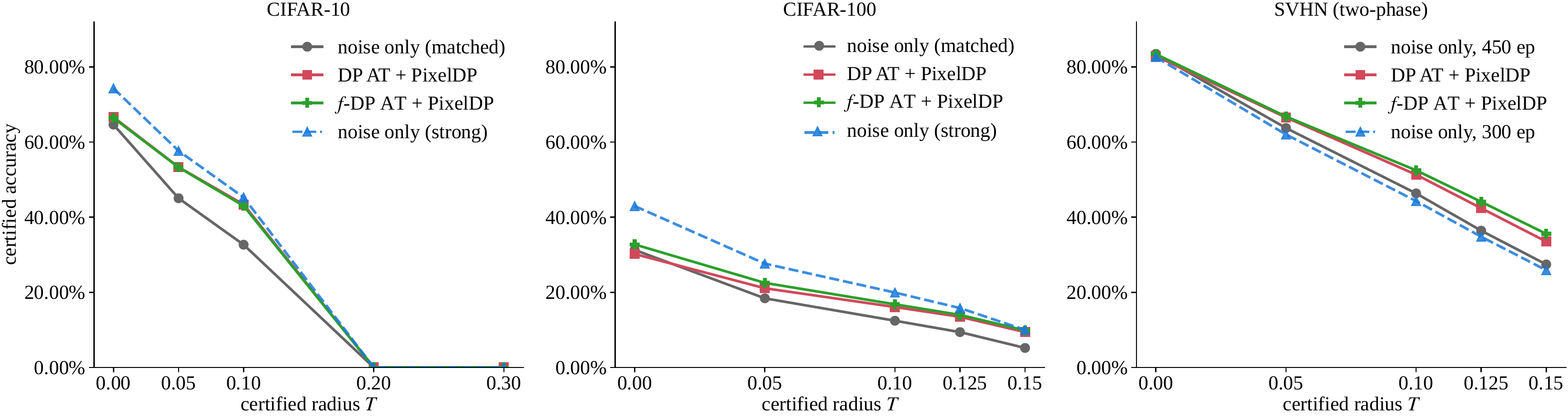}
\caption{Certified accuracy against radius under the matched protocol. Adversarial training improves certified accuracy across the
reported nonzero operating region under the matched protocol.}
\label{fig:at}
\end{figure*}

Table~\ref{tab:at-cifar10}, Table~\ref{tab:at-cifar100-svhn}, and Fig.~\ref{fig:at} give the ablation. Under standard DP calibration, adversarial training improves CIFAR-10 certified accuracy by 8.30 and 10.70 percentage points at $T=0.05$ and $T=0.10$. Across the reported positive thresholds up to $T=0.15$, the gains increase from 2.68 to 4.22 points on CIFAR-100 and from 2.81 to 6.09 points on SVHN. Both CIFAR-10 models have zero certified accuracy at $T=0.20$ and $T=0.30$. The pattern is the margin mechanism of~\cite{salman2019provably} observed through a certificate. The attack enlarges the score gap under noise, and the higher the certification threshold, the more of that gap is converted into certified points. Near the certification boundary the matched AT model approaches the strong recipe trained with $2.7\times$ the budget, on CIFAR-100 within $0.6$ points at $T{=}0.15$. The comparison against the strong recipe also documents a pitfall. Read against that reference alone, adversarial training appears to lose everywhere, and only the matched control reveals that the deficit is the training budget, not the attack.

\textbf{Two-phase training on SVHN.}
Training the joint objective from scratch on SVHN collapses to the class prior of $19.6\%$ and stays there for $300$ epochs. The noisy gradient signal through the constrained embedding is weak early in training, the class imbalance offers a genuine prior basin, and the inner attack suppresses the slow escape that noise-only training exhibits. Forking a $300$-epoch noisy checkpoint into an adversarial fine-tune and a continued noise-only control isolates the attack cleanly and restores the pattern. The collapsed run is also instructive in itself. A constant classifier certifies every one of its majority-class predictions at every radius, so certified accuracy in isolation can be gamed by a degenerate model, and we recommend always reading it jointly with conventional accuracy.

\begin{table}[!t]
\centering\small
\setlength{\tabcolsep}{5pt}
\caption{Empirical $\ell_2$ robustness (\%) on CIFAR-10 at $\varepsilon{=}0.5$, evaluated using the best checkpoints. $\dagger$ denotes evaluation of the averaged randomized prediction
rule. ``--'' indicates that the AT-only model has no valid PixelDP certificate.}
\label{tab:attacks}
\begin{tabular}{lcc}
\toprule
Attack & AT only & AT + PixelDP \\
\midrule
Clean          & 75.84 & 66.72 \\
FGSM           & 55.14 & 54.95 \\
PGD-20         & 54.15 & 51.10 \\
PGD-100        & 54.10 & 50.54 \\
CW-100         & 53.39 & 49.80 \\
BPDA           & 54.19 & 55.26 \\
EOT-PGD        & 54.17 & 50.21 \\
AutoAttack     & 52.26 & 41.52$^\dagger$ \\
\midrule
Certified $T{=}0.10$ & -- & 43.37 \\
\bottomrule
\end{tabular}
\end{table}

\textbf{Empirical robustness and sanity checks.}
Table~\ref{tab:attacks} places the certified arm next to a pure adversarial-training baseline under identical attacks. The AT-only model is empirically stronger, but it has no PixelDP certificate because it lacks the required randomized, sensitivity-controlled mechanism. The table also passes the standard adaptive-evaluation checks. For the deterministic model, BPDA coincides with PGD-20 as it must. For the randomized model, EOT is stronger than PGD and BPDA is weaker, the expected ordering when gradients are averaged over noise~\cite{athalye2018obfuscated}.

\subsection{Takeaways from the Experiment}
\begin{itemize}
\item Placement is binding. Only the patch embedding supports an enforceable sensitivity bound in a ViT, and attention-side noise is absorbed into a certificate-free model.
\item The inherited grouped constraint causes severe attenuation, whose initialization scale follows a closed-form law in the grouped coefficient count from $C=144$ to $C=3200$.
\item An $\ell_2$ certificate for the Laplace mechanism exists, is dimension free, and re-certifies repaired checkpoints without retraining, at a quantified $1.47\times$ noise cost against Gaussian.
\item The certificate lives in the front end. Swapping a frozen Inception for a frozen ViT-Base moves ImageNet baselines, not degradation curves.
\item Adversarial training improves certified accuracy over the nonzero
operating region, but only a matched-budget control isolates the gain.
\end{itemize}
\section{Reproduction}
\label{app:repro}
\textbf{Hyperparameters.} ViT-Tiny, patch $4$, resolution $32$, AdamW, weight decay $0.1$ (AT) or $0.05$ (clean), cosine schedule with linear warmup. CIFAR $400$ epochs with RandAugment and mixup, SVHN $300$ epochs with RandAugment. Certification uses $n{=}300$ draws, argmax scores, Clopper--Pearson at $\eta{=}0.95$. Adversarial training uses $\ell_2$ PGD with $\epsilon_{\rm adv}=0.5$, $K=7$ steps, and step size $\alpha=2.5\epsilon_{\rm adv}/K$. For PixelDP models, each PGD step uses EOT with two independent noise draws and averages the gradients of the per-draw cross-entropy losses. The attack budget is linearly warmed up over the first 30 epochs.

\textbf{Certifying repaired Laplace models.} With the released code, a $\Delta_{2,2}$ Laplace checkpoint is certified under Theorem~\ref{thm:lapl2} by the certification pass with \texttt{--noise\_mech laplace\_l2}, which uses the calibration $b{=}L/u$ of Eq.~(\ref{eq:calib}) and the map $\varepsilon(L')$ of Corollary~\ref{cor:calib}. The load-time sensitivity guard refuses any checkpoint whose $\Delta_{2,2}\ne1$, so an accidentally supplied AT-only checkpoint cannot produce an invalid certificate.

\textbf{Two-phase SVHN.} Train a noise-only checkpoint for $300$ epochs, then fork into an adversarial fine-tune and a continued noise-only control, both for $150$ epochs from the shared weights with reset schedules, so the two forks differ only in the presence of the attack.

\textbf{AutoAttack protocol.} For randomized checkpoints AutoAttack attacks and judges the averaged prediction over $8$ draws, and accuracy is reported over $20$ draws. Judging AutoAttack per single draw, as an off-the-shelf run does, underestimates the deployed averaged prediction, so we report the averaged-rule number and mark it accordingly.


\end{document}